\def\squarebox#1{\hbox to #1{\hfill\vbox to #1{\vfill}}}
\theoremstyle{plain}
\newtheorem{thm}{Theorem}
\newtheorem{cor}{Corollary}
\newtheorem{lem}{Lemma}
\newcommand{\supp}{\operatorname{supp}}
\renewcommand{\Re}{\mathop{\rm Re}\nolimits}
\renewcommand{\Im}{\mathop{\rm Im}\nolimits}
\def\la{\langle}
\def\ra{\rangle}
\def\lx{\la x \ra}
\def\ii{{\bf i}}
\newtheorem{rem}{Remark}
\newtheorem{prop}{Proposition}
\numberwithin{equation}{section}
\begin{document}
\def\C{{\mathbb C}}
\def\R{{\mathbb R}}
\def\N{{\mathbb N}}
\def\Z{{\mathbb Z}}
\def\T{{\mathbb T}}
\def\Q{{\mathbb Q}}
\def\SP{{\mathbb S}}
\def\d{{\partial}}
\def\mc{{\mathcal H}}
\def\1b{{\mathbb I}}
\def\tr{{\rm tr}\:}
\def\pv{\partial_x V}

\title[Representation of the spectral shift function]
{ Spectral shift function for operators with crossed magnetic and
electric fields}
\author[M. Dimassi]{Mouez Dimassi}
\author[V. Petkov]{Vesselin Petkov}

\address {D\'epartement de Math\'ematiques,
Universit\'e Paris 13, France, Avenue J.-B. Cl\'ement, 93430 Villetaneuse, France}
\email{dimassi@math.univ-paris13.fr}
\address {Universit\'e Bordeaux I, Institut de Math\'ematiques de Bordeaux,  351, Cours de la Lib\'eration, 33405  Talence, France}
\email{petkov@math.u-bordeaux1.fr}
\thanks{The second author was partially supported by the ANR project NONAa}
\thanks{2000 {\it Mathematics Subject Classification:} Primary 35P25; Secondary 35Q40}
\maketitle
\begin{abstract} We obtain a representation formula for the derivative of the spectral shift function $\xi(\lambda; B, \epsilon)$ related to the operators $H_0(B,\epsilon) = (D_x - By)^2 + D_y^2 + \epsilon  x$ and $H(B, \epsilon) = H_0(B, \epsilon) + V(x,y), \: B > 0, \epsilon > 0$. We establish a limiting absorption principle for $H(B, \epsilon)$ and an estimate ${\mathcal O}(\epsilon^{n-2})$ for $\xi'(\lambda; B, \epsilon)$, provided $\lambda \notin \sigma(Q)$, where $Q = (D_x - By)^2 + D_y^2 + V(x,y).$

\end{abstract}

\section{Introduction}
\renewcommand{\theequation}{\arabic{section}.\arabic{equation}}
\setcounter{equation}{0}

Consider the two-dimensional Schr\"odinger operator with
homogeneous magnetic and electric fields

$$H = H(B, \epsilon) = H_0(B, \epsilon) + V(x, y),\: D_x = -\ii \partial_x,\: D_y = -\ii \partial_y,$$
 where
$$H_0 = H_0(B, \epsilon) = (D_x - By)^2 + D_y^2 + \epsilon x.$$
Here $B > 0$ and $\epsilon > 0$ are proportional to the strength
of the homogeneous magnetic and electric fields. We assume that $V, \partial_x V \in C^0(\R^2; \R) \cap
L^\infty(\R^2; \R))$ and $V(x, y)$ satisfies the estimate
\begin{equation}\label{eq:1.1}
| V(x, y)|  \leq C(1 +
|x|)^{-2 -\delta }(1 + |y|)^{-1- \delta },
\delta > 0.
\end{equation}
For $\epsilon \not= 0$ we have $\sigma_{\rm ess}(H_0(B, \epsilon))
= \sigma_{\rm ess} (H(B, \epsilon)) = \R$. On the other hand, for decreasing potentials $V$ we may have embedded eigenvalues $\lambda \in \R$
and this situation is completely different from that with $\epsilon =
0$ when the spectrum of $H(B, 0)$ is formed by eigenvalues with
finite multiplicities which may accumulate only to Landau levels
$\lambda_n = (2n +1)B,\: n \in \N$ (see \cite{I},
\cite{MR}, \cite{RW} and the references cited there). The spectral
properties of $H$ and the existence of resonances have been
studied in \cite{FK1}, \cite{FK2}, \cite{DP2} under the assumption
that $V(x,y)$ admits a holomorphic extension in the $x$- variable
into a domain
$$\Gamma_{\delta_0} = \{z \in \C:\: 0 \leq |\Im z| \leq \delta_0\}.$$
Moreover, without any assumption on the analyticity of $V(x,y)$  we show in Proposition 2 below that the operator $(H - z)^{-1} - (H_0 - z)^{-1}$ for $z \in \C,\: \Im z \neq 0,$ is trace class and following the general setup \cite{K}, \cite{Ya},  we define the spectral shift function $\xi(\lambda)=
\xi(\lambda; B, \epsilon)$ related to $H_0(B, \epsilon)$ and $H(B,
\epsilon)$ by
$$\la \xi', f \ra = {\rm tr} \Bigl(f(H) - f(H_0)\Bigr),\: f \in C_0^{\infty}(\R).$$
By this formula $\xi(\lambda)$ is defined modulo a constant but for the analysis of the derivative $\xi'(\lambda)$ this is not important. Moreover, the above property of the resolvents and Birman-Kuroda theorem imply $\sigma_{\rm ac}(H_0(B, \epsilon)) = \sigma_{\rm ac}(H(B, \epsilon)) = \R.$
 A representation of the derivative $\xi'(\lambda; B, \epsilon)$ has been obtained in \cite{DP2} for strong magnetic fields $B \to +\infty$ under the  assumption that $V(x,y)$ admits an analytic continuation in $x$-direction. Moreover, the distribution of the resonances $z_j$ of the perturbed operator $H(B, \epsilon)$ has been examined in \cite{DP2} and a Breit-Wigner representation  of $\xi'(\lambda; B, \epsilon)$ involving the resonances $z_j$ was established.\\

In the literature there are a lot of works concerning 
Schr\"odinger operators with magnetic fields ($\epsilon = 0$) but there are only few ones dealing with magnetic
and Stark potentials ($\epsilon \not= 0$) (see \cite{FK1}, \cite{FK2}, \cite{DP2} and the
references given there). It should be mentioned that the tools in \cite{FK1}, \cite{FK2} and \cite{DP2} are
related to the resonances of the perturbed problem and to define the
resonances one supposes that the potential
$V(x,y)$ has an analytic continuation in $x$ variable.   In this paper we consider the operator $H$ without {\it any
assumption} on the analytic continuation of $V(x,y)$ and without
the {\it restriction} $B \to +\infty$. Our purpose is to study
 $\xi'(\lambda; B, \epsilon)$ and the
existence of embedded eigenvalues of $H$. To examine the behavior of the spectral shift function we need a representation of the derivative $\xi'(\lambda; B, \epsilon)$. The key point in this direction is the following

\begin{thm} Let $V, \partial_x V \in C^0(\R^2; \R) \cap L^{\infty}(\R^2; \R)$ and let $(1.1)$ hold for $V$ and $\partial_xV$.
 Then for every $f \in C_0^{\infty}(\R)$ and $\epsilon \neq 0$ we have
\begin{equation} \label{eq:1.2}
\tr\: \Bigl(f(H) - f(H_0)\Bigr) = - \frac{1}{\epsilon} \tr\:
\Bigl(\d_x V f(H)\Bigr).
\end{equation}
\end{thm}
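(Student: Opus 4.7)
I would prove (\ref{eq:1.2}) by exploiting the translation covariance of $H_0$ in the $x$-variable: the operator $H_0$ satisfies $T_a H_0 T_{-a} = H_0 - \epsilon a$, where $T_a = e^{-\ii a D_x}$ is the unitary translation by $a$. This allows one to produce a one-parameter family of trace identities from which (\ref{eq:1.2}) falls out by differentiation at $a = 0$. The underlying algebraic fact is that $[D_x, H_0] = -\ii\epsilon$ is a scalar, while $[D_x, H] = -\ii(\epsilon + \d_x V)$ carries the nontrivial remainder $-\ii\d_x V$.

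Concretely, let $F \in C^\infty(\R) \cap L^\infty(\R)$ be a bounded antiderivative of $f$ (so $F' = f$), set $V_a(x,y) = V(x-a,y)$, and define $\tilde H_a = H_0 + V_a$. Then $T_a H T_{-a} = \tilde H_a - \epsilon a$, so by the spectral theorem and cyclicity of the trace,
\begin{equation*}
\Phi(a) := \tr\bigl(F(\tilde H_a - \epsilon a) - F(H_0 - \epsilon a)\bigr) = \tr\bigl(T_a(F(H) - F(H_0)) T_{-a}\bigr) = \tr\bigl(F(H) - F(H_0)\bigr)
\end{equation*}
is independent of $a$. With $g_a(\lambda) := F(\lambda - \epsilon a)$ we have $\Phi(a) = \tr(g_a(\tilde H_a) - g_a(H_0))$, and a Duhamel-type chain rule (using $\partial_a g_a|_{a=0} = -\epsilon f$, $g_0' = f$, and $\partial_a V_a|_{a=0} = -\d_x V$) yields
\begin{equation*}
0 = \Phi'(0) = -\epsilon\,\tr(f(H) - f(H_0)) - \tr(f(H)\,\d_x V).
\end{equation*}
Cyclicity $\tr(f(H)\d_x V) = \tr(\d_x V f(H))$ and division by $\epsilon \neq 0$ then yield (\ref{eq:1.2}).

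The main obstacle is the rigorous justification of these formal manipulations. One must verify (i) that $F(H) - F(H_0)$ is trace class despite $F$ being only bounded (not compactly supported), and (ii) the Duhamel-type computation of $\Phi'(0)$, together with interchange of $d/da$ and $\tr$. For (i) I would use the Helffer--Sj\"ostrand representation $F(H) - F(H_0) = \frac{1}{\pi}\int_\C \bar\partial \tilde F(z)\,(z-H)^{-1} V (z-H_0)^{-1}\, dA(z)$, combined with Proposition~2 and explicit trace-norm estimates on $(z-H)^{-1}V(z-H_0)^{-1}$ in terms of $|\Im z|$, choosing $\tilde F$ to be an almost analytic extension of $F$ with $\bar\partial \tilde F$ vanishing rapidly near $\R$. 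For (ii) one needs Kato--Seiler--Simon-type bounds on $\d_x V(z-H)^{-1}$ (so that $\d_x V\, f(H)$ is trace class, furnished by the decay hypothesis (\ref{eq:1.1}) on both $V$ and $\d_x V$) and continuity of $a \mapsto V_a$ in the operator norm to control the trace-norm dependence on $a$ of the Helffer--Sj\"ostrand integrand for $g_a(\tilde H_a) - g_a(H_0)$.
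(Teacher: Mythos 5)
Your proposal is correct in outline but takes a genuinely different route from the paper. The paper works with the generator rather than the group: it forms the cut-off commutator $B_R=[\chi_R\partial_x,H]f(H)-[\chi_R\partial_x,H_0]f(H_0)$, shows $\tr (B_R)=0$ by cyclicity (its Lemma 3), and recovers (\ref{eq:1.2}) in the limit $R\to\infty$, which requires the rather heavy trace-class lemmas for operators such as $(H\pm \ii)\partial_x\circ\psi(H\pm \ii)^{-N-2}$ (Lemmas 1--4), followed by a second approximation $V\mapsto\chi_R V$ (Lemma 5, Proposition 2) to pass from potentials of the form $\phi W$ to general $V$. Your translation-covariance argument is the integrated form of the same commutator identity, since $\frac{d}{da}T_a X T_{-a}\big|_{a=0}=-[\partial_x,X]$; it dispenses with both cut-off limits and needs essentially only Proposition 2-type trace-norm bounds for $(z-H)^{-1}W(z-H_0)^{-1}$ with $W=V,\ \partial_x V$ (both satisfy (\ref{eq:1.1})), which is a genuine simplification: none of Lemmas 1--5 would be needed.

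Two points in your sketch need strengthening before it is a proof. First, $F$ is bounded but not decaying, so the Helffer--Sj\"ostrand formula does not apply to $F(H)$ or $F(H_0)$ separately; the difference representation you write down must itself be justified, e.g.\ by approximating $F$ by compactly supported functions and passing to the limit in trace norm. This does work because $F$ is locally constant off $\supp f$, so $\bar\partial\tilde F$ can be taken compactly supported and ${\mathcal O}(|\Im z|^\infty)$, and the resolvent-difference trace bound is uniform for $\Re z$ in compacts; but the approximation step should be spelled out. Second, since $g_0(\tilde H_a)$ alone is not trace class, the Duhamel differentiation must be performed on the Helffer--Sj\"ostrand representation of the \emph{difference}, and ``continuity of $a\mapsto V_a$ in operator norm'' is not quite the right hypothesis: you need differentiability of $a\mapsto V_a$ (with derivative $-\partial_x V_a$) in a weighted sup-norm compatible with the factorization $g_\delta\,[\,\cdot\,]\,g_\delta$ that produces the trace bounds, so that the difference quotients of $(z-\tilde H_a)^{-1}V_a(z-H_0)^{-1}$ converge in trace norm. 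This holds because $\partial_x V$ is uniformly continuous and itself obeys (\ref{eq:1.1}), which leaves spare decay for the weights. With these two repairs, your identity $0=\Phi'(0)=-\epsilon\,\tr\bigl(f(H)-f(H_0)\bigr)-\tr\bigl(\partial_x V\,f(H)\bigr)$ is rigorous and yields (\ref{eq:1.2}).
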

The formula (\ref{eq:1.2}) has been proved by D. Robert and X.P.Wang
\cite{RW2} for Stark Hamiltonians in absence of magnetic field ($B = 0$). In fact, the
result in \cite{RW2} says that
\begin{equation} \label{eq:1.3}
\xi'(\lambda; 0, \epsilon) = -\frac{1}{\epsilon}\int_{\R^2} \pv
\frac{\partial e}{\partial \lambda}(x,y, x, y; \lambda, 0,
\epsilon) dxdy,
\end{equation}
where $e(., .; \lambda, 0, \epsilon)$ is the spectral function of
$H(0, \epsilon).$
 The presence of magnetic filed $B \not= 0$ and Stark potential lead to some serious difficulties. The operator $H$ is not elliptic for $|x| + |y| \to \infty$ and we have double characteristics. On the other hand, the commutator $[H, x]$ involves the term $(D_x - By)$ and it creates additional difficulties. The proof of Theorem 1 is long and technical. We are going to study the trace class properties of the operators  $\psi (H \pm \ii)^{-N}$, \,\,\,$\partial_x \circ\psi (H\pm \ii)^{-N-1}$, \,\,\,\,$(H\pm \ii)\partial_x\circ\psi  (H\pm \ii)^{-N-2}$ etc.
for $N \geq 2$ and $\psi \in C_0^{\infty}(\R^2)$ (see Lemmas 1 and
2). Moreover, by an argument similar to that in Proposition 2.1 in
\cite{DP2}, we obtain estimates for the trace norms of the
operators
$$(z - H)^{-1}V(z' - H)^{-1},\: V(z - H)^{-1}(z' - H)^{-1},\:z \notin \R, z' \notin \R$$
and we apply an approximation argument. Notice that in \cite{RW2} the spectral shift function is related to the trace of the {\it time delay} operator $T(\lambda)$ defined via the corresponding scattering matrix $S(\lambda)$ (see \cite{RW1}). In contrast to \cite{RW2}, our proof  is direct and neither $T(\lambda)$ nor $S(\lambda)$ corresponding to the operator $H(B, \epsilon)$ are used.\\

The second question examined in this work is the existence of
embedded real eigenvalues and the limiting absorption principle
for $H$. In the physical literature one conjectures that for
$\epsilon \not= 0$ there are no  embedded eigenvalues. We
establish in Section 3 a weaker result saying that in any interval
$[a, b]$ we may have at most a finite number embedded eigenvalues
with finite multiplicities. Under the assumption for analytic
continuation of $V$ it was proved in \cite{FK1} that for some
finite interval $[\alpha(B, \epsilon), \beta(B, \epsilon)]$ there
are no resonances $z$ of $H(B, \epsilon)$ with $\Re z \notin
[\alpha(B, \epsilon), \beta(B, \epsilon)]$. Since the real
resonances $z$ coincide with the eigenvalues of $H(B, \epsilon)$,
we obtain some information for the embedded eigenvalues. On the
other hand, exploiting the analytic continuation and the resonances we proved in \cite{DP2} that for $B \to +\infty$ the reals parts $\Re z_j$ of the resonances $z_j$ lie outside some neighborhoods of the Landau levels. Thus the Landau levels play a role in the distribution of the resonances. It is known that the spectrum of the operator $Q = (D_x - By)^2 + D_y^2 + V(x, y)$ with decreasing potential $V$ is formed by eigenvalues (see \cite{I}, \cite{MR}, \cite{RW}). In this paper we establish a limiting absorption principle for $\lambda \notin \sigma(Q)$. In particular, we show that there are no embedded eigenvalues outside $\sigma(Q)$. This agrees with the result in \cite{DP2} obtained under the restrictions on the behavior of $V$ and $B \to +\infty$. On the other hand, the result of Proposition 3 and the estimates (\ref{eq:4.3}) have been established by X. P. Wang \cite{W} for Stark operators with $B = 0$.\\

Following the results in Section 4 and the representation of
$\xi'(\lambda; B, \epsilon)$ given in \cite{DP2}, it is natural to
expect that for $\lambda  \notin \sigma(Q)$ the derivative of the
spectral shift function $\xi'(\lambda; B, \epsilon)$ must be
bounded. In fact, we prove the following stronger result.

\begin{thm} Let the potential $V\in C^\infty(\R^2;\R)$ satisfy with some $\delta > 0$ and $n \in N,\: n \geq 2$ the estimates
\begin{equation} \label{eq:1.4}
|\partial_x^{\alpha} \partial_y^{\beta} V(x,y)| \leq C_{\alpha,
\beta} (1 + |x|)^{-n-\delta -|\alpha|} (1 + |y|)^{-2 -\delta -
|\beta|},\: \forall \alpha, \forall \beta.
\end{equation}
Then for $\lambda_0 \notin \sigma(Q)$ we have
\begin{equation} \label{eq:1.5}
\xi'(\lambda; B, \epsilon) = {\mathcal O}(\epsilon^{n -2})
\end{equation}
uniformly for $\lambda$ in a small neighborhood $\Xi \subset \R$
of $\lambda_0.$
\end{thm}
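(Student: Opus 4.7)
The plan is to combine Theorem~1 with the limiting absorption principle of Section~4 and a resolvent expansion of $H$ around $Q$. Applying the identity (\ref{eq:1.2}) to a sequence $f\in C_0^\infty(\R)$ supported near $\lambda_0$ and concentrating at $\lambda$, together with Stone's formula and the limiting absorption principle (Proposition~3 and the estimates (\ref{eq:4.3}), which are available precisely because $\lambda_0\notin\sigma(Q)$), yields the pointwise identity
\begin{equation*}
\xi'(\lambda;B,\epsilon) \;=\; -\frac{1}{\pi\epsilon}\,\Im\,\tr\Bigl(\pv\,(H-\lambda-\ii 0)^{-1}\Bigr),\qquad \lambda\in\Xi.
\end{equation*}
The right-hand side is well defined: the trace-class results of Lemmas~1 and~2, together with the decay of $\pv$ coming from (\ref{eq:1.4}), ensure that $\pv\,(H-z)^{-1}$ is trace class for $\Im z\neq 0$ and that its boundary value on $\Xi$ as $\Im z\to 0^{+}$ exists in trace norm.

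The key step is the following resolvent expansion around $Q$. Setting $R_{Q}(\lambda):=(Q-\lambda)^{-1}$, which is bounded and self-adjoint for $\lambda\in\Xi\subset\R\setminus\sigma(Q)$, and iterating
\[
(H-\lambda)^{-1}\;=\;R_{Q}(\lambda)\;-\;\epsilon\,R_{Q}(\lambda)\,x\,(H-\lambda)^{-1}
\]
exactly $n-1$ times gives
\begin{equation*}
(H-\lambda-\ii 0)^{-1}\;=\;\sum_{k=0}^{n-2}(-\epsilon)^{k}\bigl(R_{Q}(\lambda)x\bigr)^{k}R_{Q}(\lambda)\;+\;(-\epsilon)^{n-1}\bigl(R_{Q}(\lambda)x\bigr)^{n-1}(H-\lambda-\ii 0)^{-1}.
\end{equation*}
For each $k\in\{0,\dots,n-2\}$, the trace of $\pv\,(R_{Q}x)^{k}R_{Q}$ is real: the operators $R_{Q}(\lambda)$, $x$ and $\pv$ are self-adjoint on $L^2(\R^2)$, so taking adjoints and applying cyclic invariance gives
\[
\overline{\tr\bigl(\pv(R_{Q}x)^{k}R_{Q}\bigr)}\;=\;\tr\bigl(R_{Q}(xR_{Q})^{k}\pv\bigr)\;=\;\tr\bigl(\pv(R_{Q}x)^{k}R_{Q}\bigr).
\]
Hence these terms contribute nothing to the imaginary part, and
\begin{equation*}
\xi'(\lambda;B,\epsilon)\;=\;\frac{(-1)^{n}\,\epsilon^{n-2}}{\pi}\,\Im\,\tr\Bigl(\pv\,\bigl(R_{Q}(\lambda)x\bigr)^{n-1}(H-\lambda-\ii 0)^{-1}\Bigr),
\end{equation*}
reducing the theorem to a uniform-in-$\epsilon$ bound for the remaining trace.

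To produce that bound, I commute the $n-1$ factors of $x$ successively through $R_{Q}(\lambda)$ via $[R_{Q}(\lambda),x]=2\ii R_{Q}(\lambda)(D_{x}-By)R_{Q}(\lambda)$, which follows from $[Q,x]=-2\ii(D_{x}-By)$, and absorb each $x$ into the weight on the left. The outcome is a finite sum of terms of the form $(\pv\cdot x^{a})\,K(\lambda)\,(H-\lambda-\ii 0)^{-1}$ with $a\leq n-1$ and $K(\lambda)$ a bounded operator built from $R_{Q}(\lambda)$ and the first-order operators $D_{x}-By$ and $D_{y}$. The assumption (\ref{eq:1.4}) gives the pointwise bound
\[
|\pv(x,y)\,x^{a}|\;\leq\;C\lx^{-2-\delta}(1+|y|)^{-2-\delta},
\]
so $\pv\cdot x^{a}$ lies in $L^{2}\cap L^{\infty}$, and pairing this decaying weight with the weighted LAP bound (\ref{eq:4.3}) on $(H-\lambda-\ii 0)^{-1}$ between weighted $L^2$ spaces yields a trace-class operator whose trace is bounded uniformly in $\lambda\in\Xi$ and in small $\epsilon>0$. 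The main obstacle is to carry out the commutator manipulations and trace-class estimates entirely within the weighted $L^2$ framework for which Proposition~3 applies, in particular checking that every intermediate operator product is trace class with bounds that do not degenerate as $\epsilon\to 0^{+}$ and that no factor of $\epsilon^{-1}$ creeps into the estimates.
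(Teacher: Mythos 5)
There is a genuine gap, and it sits exactly at the point you flag as ``the main obstacle'': the power counting in $\epsilon$. You stop the resolvent expansion of $(H-\lambda-\ii 0)^{-1}$ around $R_Q(\lambda)=(Q-\lambda)^{-1}$ after $n-1$ iterations, so the remainder carries the prefactor $\epsilon^{n-1}$, and after dividing by $\epsilon$ coming from (\ref{eq:1.2}) you need $\tr\bigl(\pv\,(R_Q x)^{n-1}(H-\lambda-\ii 0)^{-1}\bigr)$ to be bounded \emph{uniformly} in $\epsilon$. But the only control available on the boundary value of the resolvent is the Mourre-based limiting absorption principle (\ref{eq:4.3}), whose bound is $C\epsilon^{-1}$: it necessarily degenerates as $\epsilon\to 0$, because on $\Xi$ the commutator $[\partial_x,H]=\epsilon+\partial_x V$ only has a lower bound of size $\epsilon$ (Proposition 3, via Lemma 7). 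With the estimates actually at your disposal, your argument therefore yields only ${\mathcal O}(\epsilon^{n-3})$, not ${\mathcal O}(\epsilon^{n-2})$; the asserted uniform-in-$\epsilon$ bound on the remaining trace is unsubstantiated and contradicts (\ref{eq:4.3}). This is precisely why the paper iterates the resolvent identity $n$ times rather than $n-1$: one gets the prefactor $\epsilon^{n}$, hence $\epsilon^{n-1}$ after dividing by $\epsilon$, and the extra power is exactly what absorbs the $\epsilon^{-1}$ from the LAP. It is also why the hypothesis (\ref{eq:1.4}) demands decay $(1+|x|)^{-n-\delta-|\alpha|}$ and why Proposition 4 treats exactly $n$ factors $(Q-z)^{-1}x$.

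Two further points. First, your starting identity $\xi'(\lambda)=-\frac{1}{\pi\epsilon}\Im\tr\bigl(\pv(H-\lambda-\ii 0)^{-1}\bigr)$ presupposes that $\pv\,(H-z)^{-1}$ has a boundary value on $\Xi$ \emph{in trace norm}; Lemmas 1 and 2 only give trace-class bounds of size $|\Im z|^{-2}$ for $\Im z\neq 0$, and the existence of such a trace-norm limit is nowhere established (it is essentially part of what has to be proved). The paper avoids this entirely by staying with the Helffer--Sj\"ostrand representation of $f(H)$: after the $n$-fold expansion the integrand factors as $W(z)\,\langle D_x\rangle^{-s}(z-H(\epsilon))^{-1}\langle D_x\rangle^{-s}$ with $W(z)=\langle D_x\rangle^{s}\pv\,[(z-Q)^{-1}x]^{n}\langle D_x\rangle^{s}$ trace class and analytic near $\Xi$ (Proposition 4), so only an operator-norm LAP is needed, and a Green-formula argument produces the boundary values in the sense of distributions. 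Second, the weights in your last step do not match: the decay you extract, $\pv\,x^{a}\in L^{2}\cap L^{\infty}$, is a weight in $x$, whereas (\ref{eq:4.3}) carries the weights $\langle D_x\rangle^{-s}$ (the conjugate operator is $A=D_x$). Bridging that mismatch --- commuting the $x$'s through $(Q-z)^{-1}$, controlling the commutators with $\langle D_x\rangle^{s}$, and invoking the trace-class criterion for pseudodifferential operators --- is exactly the content of Proposition 4 and is the technical heart of Section 5; it cannot be treated as a routine check. Your observation that the traces $\tr\bigl(\pv(R_Qx)^{k}R_Q\bigr)$ are real is a nice alternative to the paper's mechanism (there those terms vanish because they are analytic on ${\rm supp}\,\tilde f$), but it does not repair the missing power of $\epsilon$.
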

The estimate (\ref{eq:1.5}) has been obtained in \cite{RW2} in the case of absence of magnetic field $B = 0$ (for a Breit-Wigner formula see \cite{MRW}, \cite{DP1} for Stark Hamiltonians and \cite{DP2} for the operator $H(B, \epsilon)$). Our approach is quite different from that in  \cite{RW2}. Our proof is going without an application of a representation similar to (\ref{eq:1.3}) which leads  to complications connected with the behavior of the spectral function $e(., .; \lambda, B, \epsilon)$ corresponding to $H(B, \epsilon).$  The formula (\ref{eq:1.2}) plays a crucial role and our analysis is based on a complex analysis argument combined with a representation of $f(H)$ involving the almost analytic continuation of $f \in C_0^{\infty}(\R)$. In this direction our argument is similar to that developed in \cite{DP1} and \cite{DP2}.\\

The plan of this paper is as follows. In Sect. 2 we establish
Theorem 1. The embedded eigenvalues and Mourre estimates are
examined in Sect. 3. In Sect. 4 we prove Proposition 3 concerning
the limiting absorption principle for $H(B, \epsilon).$ Finally, in Sect. 5 we establish Theorem 2.\\

{\it Acknowledgement.} The authors  are grateful to the referees for their thorough and careful reading of the paper. Their remarks and suggestions lead to an improvement of the  first version of this paper.

\section{Representation of the spectral shift function}
\renewcommand{\theequation}{\arabic{section}.\arabic{equation}}
\setcounter{equation}{0}

 Throughout this work we will use the notations of \cite{DS} for symbols and pseudodifferential operators.
In particular, if $m: {\R^4} \rightarrow [0,+\infty[$ is an order
function (see \cite{DS}, Definition 7.4), we say that $a(z,\zeta)\in S^0(m)$ if for
every $\alpha\in\N^4$ there exists $C_{\alpha} > 0$ such that
$$|\partial_{z,\zeta}^\alpha a(z,\zeta)|\leq C_{\alpha} m(z,\zeta).$$
In the special case when $m = 1$, we will write $S^0$ instead of
$S^0(1)$. We will use the standard Weyl quantization  of symbols.
More precisely, if $p(z,\zeta)$, $(z,\zeta)\in \R^4$,  is a symbol
in $S^0(m)$, then $P^w(z,D_z)$ is the operator defined by
$$ P^w(z,D_z)u(z)=(2\pi)^{-2}\iint e^{i(z-z')\cdot \zeta} p\Bigl({z+z'\over 2},\zeta\Bigr) u(z') dz'd\zeta,\,\,\,
\hbox { for }\,\, u\in {\mathcal S}(\R^2).$$
We denote  by $P^w(z, hD_z)$ the semiclassical quantization obtained as above by quantizing $p(z,h\zeta)$.\\

Our goal in this section is to prove Theorem 1. For this purpose
we need some Lemmas. We set 
$$Q_0 = H_0 - \epsilon x=(D_x-By)^2+D_y^2,\,\,\,\,  Q=Q_0+V,$$ 
and in Lemma 1 we
will use the notation $H_1 = H$. For the simplicity we assume that
$\epsilon = B = 1$. The general case can covered by the same
argument.
\begin{lem}
Assume that $V, \partial_x V \in C^0(\R^2; \R) \cap L^\infty(\R^2;\R)$ and let $\psi\in C^\infty_0(\R^2)$. Then for $N\geq 2,\: j=0,1$ and for $\Im z\not=0$, the following operators are trace class: \\
i) $\psi (H_j\pm \ii)^{-N}$, \,\,\,$\partial_x \circ\psi (H_j\pm \ii)^{-N-1}$, \,\,\,\,$(H_j\pm \ii)\partial_x\circ\psi  (H_j\pm \ii)^{-N-2}$.\\
ii) $ (H_j\pm  \ii)^{-N}\psi$, \,\, $(H_j\pm \ii)^{-N-1}\psi\cdot\partial_x$.\\
iii)  $\psi \circ \partial_x(H_j\pm \ii)^{-N-1}$, \,\,\,\,$(H_j\pm \ii)\psi \circ\partial_x  (H_j\pm  i)^{-N-2}$.\\
iv) $(H_j\pm \ii) \partial_x (H_j\pm \ii)^ {-N-2}\psi$.\\
v) $(H_1+ \ii) \partial_x (H_1 + \ii)^ {-N-1}(H_1-z)^{-1}\psi$.\\
Moreover,
\begin{equation} \label{eq:2.1}
\Vert (H_1 +\ii) \partial_x (H_1 + \ii)^
{-N-1}(H_1-z)^{-1}\psi\Vert_{\tr}={\mathcal O}\Bigl(\frac{|z| + 1}{|\Im z|^2}\Bigr).
\end{equation}
\end{lem}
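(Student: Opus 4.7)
The plan is to establish (i) for the free operator $H_0$ via pseudodifferential calculus, transfer it to $H_1=H$ by a resolvent identity (using only that $V,\partial_xV\in L^\infty$), deduce (ii)--(iv) by duality and the Leibniz rule, and finally prove (v) together with the trace-norm estimate (2.1) by combining (iv) with a resolvent identity and commutator manipulations. I expect the free case to be the main obstacle.

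For (i) with $j=0$, in Weyl quantization with $z=(x,y)$, $\zeta=(\xi,\eta)$ the symbol of $H_0$ is $p_0(z,\zeta)=(\xi-y)^2+\eta^2+x$, and $|p_0\pm i|^2=p_0^2+1\geq 1$ uniformly. I would build $(H_0\pm i)^{-N}$ as a Weyl operator with symbol in $S^0(|p_0\pm i|^{-N})$ by parametrix construction. Composing on the left with $\psi\in C_0^\infty(\R^2_z)$ produces a Weyl operator whose symbol is smooth, compactly supported in $z$, and decays as $|\zeta|^{-2N}$ in $\zeta$; for $N\geq 2$ this symbol (together with enough derivatives) lies in $L^1(\R^4)$, which is the standard trace-class criterion in the Weyl calculus (cf.\ \cite{DS}). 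The factor $\partial_x=iD_x$ is absorbed into the symbol as $i\xi$, tamed by the additional resolvent power in $-N-1$; and $(H_0\pm i)\partial_x\circ\psi(H_0\pm i)^{-N-2}$ reduces to $\partial_x\circ\psi(H_0\pm i)^{-N-1}$ plus commutator terms via $[H_0,\partial_x]=-1$ and $[H_0,\psi]\in C_0^\infty$ (first order), each handled by the same scheme.

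For (i) with $j=1$, I would iterate $(H_1+i)^{-1}=(H_0+i)^{-1}-(H_0+i)^{-1}V(H_1+i)^{-1}$ to expand $(H_1+i)^{-N}$ as a sum of products in which each summand contains a factor $(H_0+i)^{-m}$ with $m\geq 2$, against bounded cofactors $V$ and $(H_1+i)^{-j}$; multiplication by $\psi$ on the left isolates a trace-class factor $\psi(H_0+i)^{-m}$ from the free case. Parts (ii), (iii), (iv) follow by taking Hilbert-space adjoints of (i) (noting $[\psi(H_j\pm i)^{-N}]^*=(H_j\mp i)^{-N}\psi$) and applying the Leibniz rule $\partial_x\circ\psi=\psi\circ\partial_x+(\partial_x\psi)$ to interchange positions modulo a trace-class remainder with $C_0^\infty$ coefficient. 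For (v) and (2.1), I would use $(H_1+i)(H_1-z)^{-1}=1+(z+i)(H_1-z)^{-1}$ to split
\[
(H_1+i)\partial_x(H_1+i)^{-N-1}(H_1-z)^{-1}\psi = A + (z+i)B,
\]
with $A=(H_1+i)\partial_x(H_1+i)^{-N-2}\psi$ trace class by (iv) and $\|A\|_{\text{tr}}=O(1)$, and $B=(H_1+i)\partial_x(H_1+i)^{-N-2}(H_1-z)^{-1}\psi$. Using $[H_1,\partial_x]=-1-\partial_xV$, $(H_1+i)\partial_x(H_1+i)^{-N-2}$ becomes $\partial_x(H_1+i)^{-N-1}-(1+\partial_xV)(H_1+i)^{-N-2}$, a bounded operator; expanding $(H_1+i)^{-N-1}(H_1-z)^{-1}$ by partial fractions as a combination of $(H_1+i)^{-j}$ ($j=1,\dots,N+1$) with coefficients $O(1/(z+i)^{N+2-j})$ plus a tail $(z+i)^{-N-1}(H_1-z)^{-1}$, and rearranging with commutators so that $\psi$ sits adjacent to a trace-class factor from (i)--(iv), gives $\|B\|_{\text{tr}}=O(1/|\Im z|)$; the $(z+i)$ prefactor then yields the claimed $O((|z|+1)/|\Im z|^2)$ bound.

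The principal obstacle is the free case: $H_0$ is not elliptic since $p_0$ is real-valued and takes every real value, so standard elliptic parametrix constructions fail. The $\pm i$ shift makes $|p_0\pm i|\geq 1$, but building $(H_0\pm i)^{-N}$ as a Weyl operator in the weighted class $S^0(|p_0\pm i|^{-N})$ requires controlling every Moyal correction, whose subprincipal terms involve derivatives of $p_0$ growing polynomially in $(y,\xi,\eta)$. Verifying that the weight $|p_0\pm i|^{-1}$ absorbs these corrections for $N\geq 2$, and that the resulting symbol of $\psi(H_0\pm i)^{-N}$ satisfies the trace class rather than merely Hilbert--Schmidt criterion, is the technical crux. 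By comparison, the transfer to $H_1$, the duality, and the partial-fraction bookkeeping for (2.1) are essentially formal once the free case is in place.
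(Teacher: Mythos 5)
Your overall architecture (free case by pseudodifferential calculus, then perturbation, duality, and a resolvent/partial-fraction argument for (v)) is reasonable in outline, but the two steps that carry all the weight do not go through as proposed. First, the free case: you propose to build $(H_0\pm \ii)^{-N}$ as a Weyl operator with symbol in $S^0(|p_0\pm \ii|^{-N})$, $p_0=(\xi-y)^2+\eta^2+x$. This is precisely the step that fails, and you yourself flag it without resolving it. The weight $|p_0\pm \ii|^{-1}$ is not an order function and $(p_0+\ii)^{-1}\notin S^0(|p_0+\ii|^{-1})$: on the characteristic set $x=-\bigl((\xi-y)^2+\eta^2\bigr)$, which escapes to infinity, one has $|p_0+\ii|\sim 1$ while $\partial_y p_0,\partial_\xi p_0\sim \langle \xi-y\rangle$ is arbitrarily large, so the symbol estimates (and every Moyal correction in a parametrix construction) blow up there. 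The paper never inverts $H_0$ directly for exactly this reason: it perturbs off $Q_0=(D_x-By)^2+D_y^2$, whose resolvent genuinely has symbol in $S^0(\langle \xi-y,\eta\rangle^{-2})$, and treats the whole term $W=x+V$ as the perturbation, expanding $(H_1+z)^{-1}$ in powers of $(Q_0+z)^{-1}W$ and differentiating in $z$.

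Second, even granting the free case, your transfer to $H_1$ is incomplete. In Lemma 1 the potential $V$ is only bounded (no decay), so in the iterated resolvent expansion the cross terms look like $\psi(H_0+\ii)^{-1}V(H_1+\ii)^{-1}(H_0+\ii)^{-1}\cdots$: only a single resolvent power sits next to $\psi$, a single Hilbert--Schmidt factor times bounded operators is not trace class, and the interior resolvent powers carry no spatial decay at all (neither $(Q_0+\ii)^{-2}$ nor $(H_0+\ii)^{-2}$ is trace class by itself). The actual content of the paper's proof is the redistribution of the decay supplied by $\psi$ through the entire product: one inserts weights via the operators $K^j_{l,p,l',p'}=\langle x\rangle^{-l}\langle y\rangle^{-p}(Q_0+\ii)^{-j}\langle x\rangle^{l'}\langle y\rangle^{p'}$ (trace class for weight drop $>1$, $j\geq 2$; Hilbert--Schmidt for drop $>1/2$, $j\geq 1$), uses that $\psi\langle x\rangle^{3r}\langle y\rangle^{2r}$ and $W\langle x\rangle^{-1}$ are bounded, and then the condition $j_1+\cdots+j_r\geq N\geq 2$ guarantees either one trace-class or two Hilbert--Schmidt weighted factors in each term. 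None of this bookkeeping appears in your proposal, and without it the claim that ``multiplication by $\psi$ isolates a trace-class factor'' is unjustified. (The remaining parts are closer to the paper: (ii)--(iv) by adjoints, the Leibniz rule and the commutator identity $\partial_x(H-z)^{-1}=(H-z)^{-1}\partial_x+(H-z)^{-1}(1+\partial_xV)(H-z)^{-1}$ are essentially what the paper does, and your split for (v) could be made to work, but it rests on (i) and (iv), which are the unproved core.)
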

\begin{proof} 
We will prove the lemma only for $(H_1 + \ii)$, the case concerning $(H_1 - \ii)$ is similar. On the other hand, the statements for $(H_0 +\ii)$ follow from those for $(H_1 + \ii)$ when $V = 0$.

From the first resolvent equation, we obtain
$$ (H_1+z)^{-1}=(Q_0
+z)^{-1}-(Q_0 +z)^{-1}(x+V)( H_1+z)^{-1}$$
\begin{equation} \label{eq:2.2}
=(Q_0 +z)^{-1}+\sum_{j=1}^ {N+2}(-1)^j(Q_0 +z)^{-1}\Bigl((x+V)(Q_0
+z)^{-1}\Bigr)^{j}
\end{equation}
$$+(-1)^{N+3}\Bigl( (Q_0 +z)^{-1}(x+V)\Bigr)^{N+3}(H_1+z)^{-1}.$$
Taking $(N-1)$ derivatives with respect to $z$ in the above
identity and setting $z = \ii$, we see that $(H_1+\ii)^{-N}$ is a linear combination of terms
$$ {\mathcal K}_N:=(Q_0+\ii)^{-j_1}W(Q_0+\ii)^{-j_2}W...(Q_0+\ii)^{-j_r}W(H_1 +\ii)^{-p},$$
with $j_1+...+j_r\geq N,\:j_1 \geq 1,\: p\geq0$ and $W(x)=x +V(x)$. 

Recall that  if $P \in S^0(m)$ with $m\in L^1(\R^4),$  (resp. $m \in L^2(\R^4)$) then the
corresponding operator is trace class (resp. Hilbert-Schmidt).
By using this and  the fact that the symbol of $(Q_0 + \ii)^{-1}$ is
in $S^0(\langle \xi-y,\eta\rangle^{-2})$, we deduce that the operator  
$$K_{l,p,l',p'}^j:=\la x\ra^{-l}\la y\ra^{-p}(Q_0+\ii)^{-j}\la x\ra^{l'}\la y\ra^{p'}$$
 is trace class one for $l-l',p-p'>1, j\geq 2$ and 
Hilbert-Schmidt one for $l-l',p-p'>{1/2}, j\geq 1.$
Next, we write $\psi {\mathcal K}_N$ as follows
\begin{equation}\label{eq:2.3}
\psi {\mathcal K}_N=\psi \la x\ra^{3r} \la y\ra^{2r} K_{3r,2r,3r-2,2r-2}^{j_1} W\la x\ra^{-1}K_{3r-3,2r-2,3r-1,2r-4}^{j_2}W\la x\ra^{-1}
\end{equation}
$$...
W\la x\ra ^{-1} {\mathcal K}_{3,2,1,0}^{j_r}W\la x\ra^{-1}(H_1+\ii)^{-p}$$
Since $j_1+j_2+...+j_r\geq N\geq 2$, in the above decomposition, there are  at least two Hilbert-Schmidt operators or one of trace class. 
Combining this with the fact $\psi \la x\ra^{3r} \la y\ra^{2r}$, $W\la x\ra^{-1}$ and $(H_1+\ii)^{-p}$ are bounded from
$L^2(\R^2)$ into $L^2(\R^2)$, we conclude that $\psi {\mathcal K}_N$ is trace class operator. Thus $\psi(H_1+\ii)^{-N}$ is also a trace class operator. Repeating the same arguments, we obtain the proof for $\partial_x \circ\psi (H_j\pm \ii)^{-N-1}$.

As above to treat $(H_j\pm \ii)\partial_x\circ\psi  (H_j\pm \ii)^{-N-2}$, it suffices to show that
$(H_j\pm \ii)\partial_x\circ\psi {\mathcal K}_N$ is trace class. If we have $j_1\geq 2$ the proof is   completely similar to that of  $\psi(H_1+\ii)^{-N}$. In the case where $j_1=1$ since $(H_1+ \ii)\partial_x (Q_0+\ii)^{-1}$
is not bounded, we have to exploit the following representation
$$(H_1+ \ii)\partial_x\circ\psi {\mathcal K}_N=(H_1+\ii)(\partial_x\psi) {\mathcal K}_N$$
$$+ (H_1+\ii)\psi (Q_0+\ii)^{-1}\partial_x\circ W(Q_0+\ii)^{-j_2}W...(Q_0+\ii)^{-j_r}W(H_1+\ii)^{-p}.$$
Next use the fact that $\partial_xW \in L^\infty$ and repeat the argument of the proof above.

Recall that  $A$ is trace class if and only if the adjoint
operator $A^*$ is trace class. Consequently, (i) implies (ii).
Since $\psi\cdot\partial_x=\partial_x\cdot\psi-(\partial_x\psi)$,
the assertion (iii) follows from (i).

To deal with (iv), we apply the following obvious identity with $z = -\ii$,
\begin{equation}\label{eq:2.4}
\partial_x (H-z)^ {-1}=(H-z)^{-1}\partial_x+ (H-z)^{-1}(1+\pv)(H -z)^{-1},
\end{equation}
and obtain
\begin{equation}\label{eq:2.5}
(H_1 +\ii)\partial_x(H_1 +\ii)^{-N}\psi=(H_1 + \ii)^{-N}\partial_x\psi+\sum_{j=0}^{N-1} (H_1 +\ii)^{-j}(1+\pv)(H_1
+\ii)^{-N+j}\psi.
\end{equation}
Applying (i) and (ii) to each term on the right hand side of
(\ref{eq:2.5}), we get (iv).

Now we pass to the proof of (v). Applying (\ref{eq:2.4}), we obtain
$$(H_1+\ii) \partial_x (H_1 +\ii)^{-N-1}(H_1-z)^{-1}\psi=(H_1 + \ii) (H_1-z)^{-1}\partial_x (H_1+\ii)^ {-N-1}\psi$$
$$+ (H_1 + \ii)(H_1-z)^{-1}(1+\pv)(H_1-z)^{-1}(H_1+\ii)^{-N}\psi.$$
Combining the above equation with (i), (ii), (iv) and using the estimate
$$\Vert (H_1 + \ii)(H_1-z)^ {-1}\Vert={\mathcal O}\Bigl(\frac{|z| + 1}{\vert \Im z\vert}\Bigr),$$
we get (\ref{eq:2.1}).
\end{proof}
\begin{lem}Assume that $V(x,y)=\phi(x,y) W(x,y)$, where $\phi \in C^\infty_0(\R^2;\R)$ and $W, \partial_x W \in C^0(\R^2; \R) \cap L^\infty(\R^2;\R)$. Then for $N \geq 4$ the  operator
$$(H+\ii)\partial_x \Bigl[ (H+\ii)^{-N}-(H_0+\ii)^{-N}\Bigr],$$
is trace class.
\end{lem}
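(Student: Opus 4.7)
The plan is to reduce to a finite sum of trace-class operators built from Lemma 1, via two tools: the second resolvent identity (iterated to telescope $(H+\ii)^{-N}-(H_0+\ii)^{-N}$) and the commutator identity $(H+\ii)\partial_x = \partial_x(H+\ii) - (1+\pv)$, which follows from $[H,\partial_x] = -(1+\pv)$. Combining the commutator identity with $(H+\ii)(H_0+\ii)^{-N} = (H_0+\ii)^{-(N-1)} + V(H_0+\ii)^{-N}$ produces the decomposition
\[
(H+\ii)\partial_x\bigl[(H+\ii)^{-N}-(H_0+\ii)^{-N}\bigr] = \partial_x R_{N-1} - \partial_x V(H_0+\ii)^{-N} - (1+\pv)R_N,
\]
where $R_m := (H+\ii)^{-m}-(H_0+\ii)^{-m}$. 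Thus the unbounded left factor $(H+\ii)$ is absorbed, and I am reduced to showing each of three operators is trace class. Throughout I fix $\tilde\phi \in C_0^\infty(\R^2)$ equal to $1$ on $\supp \phi$, so that $V = \tilde\phi V = V\tilde\phi$.

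The last two summands are handled directly by Lemma 1. For $(1+\pv)R_N$, since $(1+\pv)$ is bounded it suffices to show $R_N$ is trace class: telescoping $R_N$ via the resolvent identity produces a sum of terms $(H+\ii)^{-(k+1)}V(H_0+\ii)^{-(N-k)}$; in each I insert $\tilde\phi$ adjacent to $V$ on the side carrying the higher power of resolvent, so that Lemma 1\,(i) or (ii) supplies a trace-class factor (exponent $\ge 2$). For the middle summand, I split $\partial_x V = V\partial_x + M_{\partial_x V}$ as operator composition; then $V\partial_x(H_0+\ii)^{-N} = V\cdot[\tilde\phi\,\partial_x(H_0+\ii)^{-N}]$ is bounded times trace class by Lemma 1\,(iii) (exponent $N\ge 3$), and $M_{\partial_x V}(H_0+\ii)^{-N}$ is trace class by Lemma 1\,(i), since $\partial_x V$ is continuous, bounded, and compactly supported.

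The first summand $\partial_x R_{N-1}$ is the heart of the argument. Telescoping it once more yields a sum of operators $\partial_x(H+\ii)^{-(k+1)}V(H_0+\ii)^{-(N-1-k)}$ for $k=0,\ldots,N-2$. For each I combine three tools: the iterated commutation identity $\partial_x(H+\ii)^{-m} = (H+\ii)^{-m}\partial_x + \sum_{j=1}^{m}(H+\ii)^{-j}(1+\pv)(H+\ii)^{-(m+1-j)}$ (obtained from (\ref{eq:2.4})), the splitting $\partial_x V = V\partial_x + M_{\partial_x V}$, and the simpler analogue $\partial_x(H_0+\ii)^{-m} = (H_0+\ii)^{-m}\partial_x + m(H_0+\ii)^{-(m+1)}$ (valid since $[\partial_x,H_0]=1$), together with judicious insertions of $\tilde\phi$. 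This reorganizes each term so that every remaining $\partial_x$ sits adjacent to a resolvent of sufficiently high power, where Lemma 1\,(i)/(ii)/(iii) applies.

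The main obstacle is the combinatorial bookkeeping of these manipulations at the endpoint cases $k=0$ and $k=N-2$, where the cutoff $\tilde\phi$ cannot be placed symmetrically. Precisely at $k=0$ the derivative ultimately reaches $(H_0+\ii)^{-(N-1)}$, where Lemma 1\,(iii) requires the exponent to be at least $3$; this is exactly the step that forces $N-1\ge 3$, i.e., the hypothesis $N\ge 4$. The case $k=N-2$ is handled symmetrically on the $H$-side, with any residual commutator $[\tilde\phi,(H+\ii)^{-(N-1)}]$ expanded via the resolvent identity into lower-order terms already covered by Lemma 1.
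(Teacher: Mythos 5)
Your overall route is essentially the paper's: the paper likewise expands $(H+\ii)^{-N}-(H_0+\ii)^{-N}$ through the resolvent identity into terms $(H+\ii)^{-j}V(H_0+\ii)^{-(N+1-j)}$ and then reduces to Lemma 1; your preliminary step $(H+\ii)\partial_x=\partial_x(H+\ii)-(1+\partial_xV)$ combined with $(H+\ii)\bigl[(H+\ii)^{-N}-(H_0+\ii)^{-N}\bigr]=R_{N-1}-V(H_0+\ii)^{-N}$ is correct and simply removes the unbounded prefactor first, so that only items (i)--(iii) of Lemma 1 are invoked rather than the composite items (iv)--(v). The treatment of $(1+\partial_xV)R_N$ and of $\partial_x\circ V\,(H_0+\ii)^{-N}$ is fine.

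The gap is in the reduction of $\partial_xR_{N-1}$, specifically in the claim that after your commutations every remaining $\partial_x$ lands next to a resolvent block of high enough power for Lemma 1 (i)/(ii)/(iii), and in your assertion that only $k=0$ and $k=N-2$ are delicate. All items of Lemma 1 that carry both a cutoff and a $\partial_x$ (namely $\partial_x\circ\psi(H\pm\ii)^{-M}$, $\psi\circ\partial_x(H\pm\ii)^{-M}$, $(H\pm\ii)^{-M}\psi\,\partial_x$) require $M\ge 3$. The telescoped terms are $\partial_x(H+\ii)^{-(k+1)}\phi W(H_0+\ii)^{-(N-1-k)}$ with total resolvent power $N$; for $N\ge 5$ one of the two blocks always has power at least $3$ and your scheme (pushing $\partial_x$ onto that block, or passing to adjoints for the $H$-side) closes. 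But at the minimal case $N=4$ the middle term $k=1$ has exponents $(2,2)$: whichever way you move $\partial_x$ -- via the analogue of (\ref{eq:2.4}), via $[\partial_x,H_0]=1$, or by taking adjoints -- it ends adjacent to a block of exactly two resolvents, below the stated threshold, and the alternative of placing the trace-class factor $\tilde\phi(H_0+\ii)^{-2}$ or $(H+\ii)^{-2}\tilde\phi$ on the opposite side forces you to know that a localized first-order operator such as $\phi W\,\partial_x(H_0+\ii)^{-2}$ (equivalently $\psi(D_x-By)(H_0+\ii)^{-1}$) is bounded, or that $\psi\,\partial_x(H_0+\ii)^{-2}$ is itself trace class. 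Both facts are true and provable by the quadratic-form bootstrap or by the weighted pseudodifferential factorizations used inside the proof of Lemma 1, but they are not among the black-box items you allow yourself, so as written the argument does not cover $N=4$; your accounting of where the hypothesis $N\ge4$ enters (only at $k=0$) is therefore incomplete, and this borderline term is exactly where an extra lemma must be supplied.
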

\begin{proof}
Taking ($N-1$) derivatives with respect to $z$ in the resolvent identity
$$(H+z)^{-1}-(H_0+z)^{-1}=-(H+z)^{-1}V(H_0+z)^{-1}$$
and setting $z =\ii$, we see that $(H+\ii)^{-N}-(H_0 + \ii)^{-N}$
is a linear combination of terms
$$(H+\ii)^{-j}V(H_0+\ii)^{-(N+1+j)}$$
with $1\leq j\leq N$. Composing the above terms by $(H
+\ii)\partial_x$ and applying Lemma 1, we complete the proof.
\end{proof}
\begin{lem}
Assume that $V$ satisfies the assumptions of Lemma $1$. Let $f\in C^\infty_0(\R)$ and $\psi\in C^\infty_0(\R^2)$. Then the operators $$\psi f(H_i),\,\,\, H_i \psi\partial_x f(H_i),\,\,\,\,
\psi\partial_x H_if(H_i)$$ are trace class and we have
$$ \tr \Bigl( H_i \psi\partial_x f(H_i)\Bigr)= \tr \Bigl(\psi\partial_x H_if(H_i)\Bigr).$$
\end{lem}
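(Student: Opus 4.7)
\emph{Plan.} My approach is to establish the trace-class assertions by combining functional calculus, the commutator identity $[H_i,\partial_x]=-(1+\d_xV)$, and the trace ideal estimates of Lemma 1, and then to derive the trace identity from cyclicity applied through a spectral truncation of $H_i$.

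For the trace-class assertions, fix $N\geq 2$ and set $F(t):=(t+\ii)^Nf(t)\in C_0^\infty(\R)$, so that $f(H_i)=(H_i+\ii)^{-N}F(H_i)$ with $F(H_i)$ bounded. Lemma 1(i) immediately gives $\psi f(H_i)=\psi(H_i+\ii)^{-N}F(H_i)$ trace class. For $\psi\partial_x f(H_i)$ (which I shall call $A$) and for $\psi\partial_x H_i f(H_i)=\psi\partial_x g(H_i)$ with $g(t):=tf(t)\in C_0^\infty(\R)$, the same reduction combined with Lemma 1(iii) yields the trace-class property. To handle $H_i\psi\partial_x f(H_i)$, I would commute $H_i$ inward using $H_i\psi=\psi H_i+[H_i,\psi]$ and $H_i\partial_x=\partial_x H_i-(1+\d_xV)$, obtaining a finite linear combination of
$$\psi\partial_x(H_i+\ii)^{-k}\widetilde F(H_i),\qquad \psi(1+\d_xV)(H_i+\ii)^{-k}\widetilde F(H_i),\qquad [H_i,\psi]\partial_x(H_i+\ii)^{-k}\widetilde F(H_i),$$
for suitable $\widetilde F\in C_0^\infty(\R)$ and appropriate $k$; the first two types are trace class by Lemma 1(i) and (iii), while the third is handled by the proof technique of Lemma 1 applied to the first-order differential operator $[H_i,\psi]$, whose coefficients lie in $C_0^\infty(\R^2)$.

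For the trace identity, set $A:=\psi\partial_x f(H_i)$, trace class by the above. Spectral calculus gives $f(H_i)H_i=H_if(H_i)=g(H_i)$, so on $D(H_i)$ one has
$$AH_i=\psi\partial_x f(H_i)H_i=\psi\partial_x g(H_i),$$
and the right-hand side extends to a trace class operator on the full Hilbert space. Thus the claim reduces to $\tr(H_iA)=\tr(AH_i)$. Letting $P_n:=\bbbone_{[-n,n]}(H_i)$, the operator $H_iP_n=P_nH_i$ is bounded and cyclicity of the trace gives
$$\tr(H_iP_n\cdot A)=\tr(A\cdot H_iP_n)=\tr(\psi\partial_x g(H_i)P_n).$$
Since $H_iA$ and $\psi\partial_x g(H_i)$ are trace class and $P_n\to I$ strongly, the standard fact that $KP_n\to K$ and $P_nK\to K$ in trace norm for any trace class $K$ (valid because $I-P_n$ is self-adjoint of norm $\leq 1$ and converges strongly to $0$) allows passage to the limit, yielding $\tr(H_iA)=\tr(AH_i)$, as desired.

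The main obstacle I anticipate lies in the trace-class verification for $H_i\psi\partial_x f(H_i)$: the non-ellipticity of $H_i$ at infinity and the failure of $\partial_x$ to commute with $H_i$ force several commutator steps, each of which introduces first-order differential operators (from $[H_i,\psi]$ and from $\d_xV$) that must be absorbed by sufficiently many powers of $(H_i+\ii)^{-1}$ so as to land in the trace ideal via the techniques developed in Lemma 1.
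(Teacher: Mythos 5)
Your proposal is correct, but it takes a genuinely different route from the paper at two points, so let me compare. For the trace-class assertions the paper simply writes $f(H_i)=(H_i+\ii)^{-4}g(H_i)$ with $g(x)=(x+\ii)^4f(x)$ bounded and reads everything off from Lemma 1; in particular $H_i\psi\partial_x f(H_i)$ is covered directly by the second item of Lemma 1(iii), because $H_i\psi\partial_x(H_i+\ii)^{-4}=(H_i+\ii)\psi\partial_x(H_i+\ii)^{-4}-\ii\,\psi\partial_x(H_i+\ii)^{-4}$, so no commutation of $H_i$ past $\psi$ and $\partial_x$ is needed. Your commutator expansion is workable, but it creates the leftover term $[H_i,\psi]\partial_x(H_i+\ii)^{-k}\widetilde F(H_i)$, which you justify only by appealing to ``the proof technique of Lemma 1'' rather than to a stated result; this can indeed be carried out (the coefficients of $[H_i,\psi]$ are in $C_0^\infty$ and $(D_x-By)(Q_0+\ii)^{-1}$, $D_y(Q_0+\ii)^{-1}$ are bounded), but it is extra work that the cited lemma already spares you. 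For the trace identity the paper argues purely algebraically: with $H_i\psi\partial_x(H_i+\ii)^{-4}$ and $\psi\partial_x(H_i+\ii)^{-4}$ trace class and $g(H_i)$, $H_ig(H_i)$ bounded, two applications of cyclicity together with the mutual commutation of $H_i$, $g(H_i)$, $(H_i+\ii)^{-4}$ give the equality in one line, with no limiting procedure. Your route --- cyclicity against the bounded truncation $H_iP_n$, $P_n=\bbbone_{[-n,n]}(H_i)$, followed by trace-norm convergence $P_nK\to K$, $KP_n\to K$ for trace class $K$ (which is exactly the paper's Proposition 1 plus its adjoint version) --- is also valid, since $A=\psi\partial_x f(H_i)$ maps into $D(H_i)$, so $H_iP_nA=P_n(H_iA)$, and $AH_iP_n=\psi\partial_x g(H_i)P_n$ by functional calculus; it gains nothing here over the one-line algebraic argument, but it is a more robust scheme in situations where a bounded function of $H_i$ cannot be inserted as conveniently as $g(H_i)$ is in the paper's factorization.
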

\begin{proof}
Set $g(x)=(x + \ii)^4f(x)$. Since $g(H_i)$ is bounded, it follows
from Lemma 1 that the operators
$$\psi(H_i + \ii)^{-4}g(H_i),\,\,\,H_i\psi\partial_x (H_i + \ii)^{-4} g(H_i),\,\,\,\,\psi\partial_x (H_i +\ii)^{-4} H_ig(H_i),$$
are trace class, and the cyclicity of the trace yields
$$\tr \Bigl( H_i \psi\partial_x f(H_i)\Bigr)= \tr \Bigl(H_i\psi\partial_x (H_i+\ii)^{-4} g(H_i)\Bigr)=
\tr \Bigl(H_ig(H_i)\psi\partial_x (H_i+\ii)^{-4} \Bigr)$$
$$= \tr \Bigl(\psi\partial_x (H_i+\ii)^{-4} g(H_i)H_i\Bigr)=
\tr \Bigl(  \psi\partial_x H_if(H_i)\Bigr).$$ Notice that in the
above equalities we have used the fact that the operators $g(H_i),
H_i$ and $(H_i + \ii)^{-4}$ commute.
\end{proof}
\begin{lem} Let $V$ be as in Lemma $2$. Then for every
$f\in C^\infty_0(\R)$  the operators
$$f(H)-f(H_0),\,\,\,\,\partial_x\Bigl(f(H)-f(H_0)\Bigr)\,\,\,\,\hbox { and } \,\,\,\,(H \pm \ii)\partial_x \Bigl(f(H)-f(H_0)\Bigr)$$ are trace class.
\end{lem}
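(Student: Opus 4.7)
The plan is to combine the Helffer--Sj\"ostrand functional calculus with the resolvent identity, factorize the integrand through the sandwich $V=\chi V\chi$ with a cutoff $\chi\in C_0^\infty(\R^2)$ satisfying $\chi\equiv 1$ on $\supp\phi$, and apply the trace class estimates of Lemma~1.

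I would choose an almost analytic extension $\tilde f\in C_0^\infty(\C)$ of $f$ with $|\bar\partial\tilde f(z)|=O_M(|\Im z|^M)$ for every $M\ge 0$. The Helffer--Sj\"ostrand formula, together with the resolvent identity $(H-z)^{-1}-(H_0-z)^{-1}=-(H-z)^{-1}V(H_0-z)^{-1}$, gives
\[
f(H)-f(H_0)=\frac{1}{\pi}\int_\C\bar\partial\tilde f(z)\,(H-z)^{-1}V(H_0-z)^{-1}\,dL(z),
\]
and, moving $\partial_x^\tau(H\pm\ii)^\sigma$ inside the integral, analogous representations for $\partial_x(f(H)-f(H_0))$ and $(H\pm\ii)\partial_x(f(H)-f(H_0))$. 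Each integrand factors as $\bigl[(H\pm\ii)^\sigma\partial_x^\tau(H-z)^{-1}\chi\bigr]\cdot V\cdot\bigl[\chi(H_0-z)^{-1}\bigr]$, and I would expand each bracket using
\[
(H-z)^{-1}=\sum_{k=0}^{N}(z+\ii)^k(H+\ii)^{-(k+1)}+(z+\ii)^{N+1}(H+\ii)^{-(N+1)}(H-z)^{-1}
\]
(and its $H_0$ analog) with $N$ sufficiently large, so that each resulting summand falls under Lemma~1: parts (i)--(ii) for $\sigma=\tau=0$, part (iii) for $\sigma=0,\tau=1$, and parts (iv)--(v) for $\sigma=\tau=1$; part~(v) supplies the critical trace-norm bound $O((|z|+1)/|\Im z|^2)$ on the remainder involving $(H-z)^{-1}$.

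Each integrand then takes the form $A\cdot V\cdot B$, where $A$ and $B$ are finite sums of operators that are either Hilbert--Schmidt or trace class, with norms polynomially bounded in $|z|$ and $|\Im z|^{-1}$. Since the trace class is a two-sided ideal and the product of two Hilbert--Schmidt operators is trace class, the integrand is trace class with polynomially bounded trace norm. The faster-than-polynomial decay of $\bar\partial\tilde f(z)$ as $|\Im z|\to 0$ then gives absolute convergence of the integral in the trace norm, so all three operators are trace class.

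The main obstacle lies in the low-index $k=0$ terms of the expansion, where the individual factors $(H+\ii)^{-1}\chi$ and $\chi(H_0+\ii)^{-1}$ are only Hilbert--Schmidt, not trace class; the sandwich $V=\chi V\chi$ is indispensable here, since it allows a Hilbert--Schmidt factor on each side of the bounded multiplier $V$ to combine into a trace class operator. The most delicate instance is $(H+\ii)\partial_x(f(H)-f(H_0))$, where $(H+\ii)\partial_x$ interacts badly with a single resolvent; Lemma~1(v) is designed precisely for the composition $(H+\ii)\partial_x(H+\ii)^{-N-1}(H-z)^{-1}\chi$ and provides both the trace class property and the polynomial $|z|$-dependence required to make the integration against $\bar\partial\tilde f$ converge in trace norm.
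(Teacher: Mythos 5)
There is a genuine gap, and it sits at exactly the hardest case the lemma is about, $(H\pm\ii)\partial_x\bigl(f(H)-f(H_0)\bigr)$. After you move $(H+\ii)\partial_x$ inside the Helffer--Sj\"ostrand integral and expand $(H-z)^{-1}=\sum_{k=0}^{N}(z+\ii)^k(H+\ii)^{-(k+1)}+(z+\ii)^{N+1}(H+\ii)^{-(N+1)}(H-z)^{-1}$ inside the bracket, the summands with $k=0,1,2$ produce the operators $(H+\ii)\partial_x(H+\ii)^{-1}\chi$, $(H+\ii)\partial_x(H+\ii)^{-2}\chi$ and $(H+\ii)\partial_x(H+\ii)^{-3}\chi$, and these do \emph{not} ``fall under Lemma 1'': part (iv) needs at least four resolvent powers ($(H\pm\ii)\partial_x(H\pm\ii)^{-N-2}\psi$ with $N\ge 2$), and part (v) only treats the remainder term. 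The $k=0$ term is not even bounded (writing $(H+\ii)\partial_x(H+\ii)^{-1}=\partial_x-(\epsilon+\partial_xV)(H+\ii)^{-1}$ exhibits a bare $\partial_x\circ\chi$), and the $k=1$ term is bounded but of far too high order to be trace class; so your fallback --- pairing a Hilbert--Schmidt factor on each side of the bounded multiplier $V$ --- does not apply here, because these left factors are not Hilbert--Schmidt (the analogous $k=0$ term $\partial_x(H+\ii)^{-1}\chi$ in the case $\sigma=0,\tau=1$ is likewise not covered). Composing an unbounded, or merely bounded, left factor with a trace class right factor does not give a trace class operator, so rescuing the scheme would require pushing the surplus $(H+\ii)\partial_x$ through $V$ onto the $H_0$-resolvents by additional commutator identities, i.e.\ essentially reproving parts of Lemma 1 with lower exponents --- precisely what the expansion was meant to avoid.

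The paper organizes the proof so that the unbounded prefactor always meets four resolvent powers \emph{before} any almost-analytic integral is written: with $g(x)=(x+\ii)^4f(x)$ it splits
$(H+\ii)\partial_x\bigl(f(H)-f(H_0)\bigr)=(H+\ii)\partial_x\bigl((H+\ii)^{-4}-(H_0+\ii)^{-4}\bigr)g(H_0)+(H+\ii)\partial_x(H+\ii)^{-4}\bigl(g(H)-g(H_0)\bigr)$.
The first term is trace class by Lemma 2 (which your proposal never invokes); in the second, the Helffer--Sj\"ostrand formula is applied to $g$, and the integrand $(H+\ii)\partial_x(H+\ii)^{-4}(z-H)^{-1}V(z-H_0)^{-1}$ is exactly the configuration of Lemma 1(v), whose bound (2.1) together with $\bar\partial\tilde g(z)={\mathcal O}(|\Im z|^{\infty})$ gives convergence in trace norm. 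Your treatment of the remainder term coincides with this last step, but without the $g=(x+\ii)^4f$ device (or an equivalent one) the low-order terms of your expansion are unaccounted for, so the proposal as written does not prove the lemma.
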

\begin{proof}
Let $g(x)=(x + \ii)^4 f(x)$ be as above. We decompose
$$(H+\ii)\partial_x \Bigl(f(H)-f(H_0)\Bigr)=(H+\ii)\partial_x \Bigl((H+\ii)^{-4}-(H_0+ \ii)^{-4}\Bigr)g(H_0)+$$
$$
(H+\ii)\partial_x (H+\ii)^{-4}\Bigl(g(H)-g(H_0)\Bigr)= I + II.$$
According to Lemma 2, the operator $I$ is trace class. To treat $II$,
we use the Helffer-Sj\"ostrand formula
$$(II)=-{1\over \pi} \int \overline\partial \tilde g(z)(H+\ii)\partial_x (H+\ii)^{-4}\Bigl( (z-H)^{-1}-(z-H_0)^{-1}\Bigr) L(dz)$$
$$=-{1\over \pi} \int \overline\partial \tilde g(z)(H + \ii)\partial_x (H+\ii)^{-4}(z-H)^{-1}V(z-H_0)^{-1} L(dz),$$
where $\tilde{g}(z) \in C_0^{\infty}(\C)$ is an almost analytic continuation of $g$
such that $\overline\partial\tilde g(z)={\mathcal O}(\vert \Im
z\vert^\infty)$,
 while $L(dz)$ is the Lebesgue measure on $\C.$
Now applying Lemma 1, (v),  we see that the operator
$$(H+\ii)\partial_x (H+\ii)^{-4}(z-H)^{-1}V$$
 is trace class. Since $|z|$ is bounded on supp $\tilde{g}$, we can apply (\ref{eq:2.1}) to the right hand part of the above equation and combining this with $\overline\partial\tilde g(z)={\mathcal O}(\vert \Im z\vert^\infty)$, we deduce that $II$ is  trace class. Summing up, we conclude that $(H+\ii)\partial_x \Bigl(f(H)-f(H_0)\Bigr)$ is trace class. The same argument works for $(H - \ii)\partial_x \Bigl(f(H)-f(H_0)\Bigr)$. The proof concerning $f(H)-f(H_0)$ and $\partial_x\Bigl (f(H)-f(H_0)\Bigr)$ are similar and simpler.
\end{proof}
To establish Theorem 1, we also need the following abstract
result. For the reader convenience we present a proof.
\begin{prop}
Let $A$ be an operator of trace class on some Hilbert space $H$
and let $\{K_n\}$ be sequences of bounded linear operator which
converges strongly to $K\in {\mathcal L}(H)$. Then
$$ \lim_{n\rightarrow \infty} \Vert K_nA-KA\Vert_{\rm tr}=0.$$
\end{prop}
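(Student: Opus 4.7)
The plan is to combine the Banach--Steinhaus theorem with the Schmidt decomposition of the trace-class operator $A$, thereby reducing the statement to a dominated convergence argument on an $\ell^1$ series.

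First I would observe that, since $K_n x \to K x$ for every $x \in H$, the Banach--Steinhaus theorem supplies a uniform bound
$$M:=\sup_{n}\|K_{n}\|+\|K\|<\infty.$$
Next, because $A$ is trace class, I would write its Schmidt (singular value) decomposition
$$A\,=\,\sum_{j} s_{j}\,\langle\,\cdot\,,\phi_{j}\rangle\,\psi_{j},$$
where $\{\phi_{j}\}$ and $\{\psi_{j}\}$ are orthonormal systems in $H$ and the singular values $s_{j}\ge 0$ satisfy $\sum_{j}s_{j}=\|A\|_{\rm tr}<\infty$. Applying $K_{n}-K$ on the left gives the rank-one expansion
$$(K_{n}-K)A\,=\,\sum_{j} s_{j}\,\langle\,\cdot\,,\phi_{j}\rangle\,(K_{n}-K)\psi_{j},$$
and since each rank-one operator $\langle\,\cdot\,,\phi\rangle\,\eta$ has trace norm $\|\phi\|\,\|\eta\|$, the triangle inequality for $\|\cdot\|_{\rm tr}$ yields
$$\|K_{n}A-KA\|_{\rm tr}\,\le\,\sum_{j} s_{j}\,\|(K_{n}-K)\psi_{j}\|.$$

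For each fixed $j$, strong convergence gives $\|(K_{n}-K)\psi_{j}\|\to 0$, while every summand is dominated by $2Ms_{j}$, a summable sequence independent of $n$. Dominated convergence applied termwise to the above series then yields $\|K_{n}A-KA\|_{\rm tr}\to 0$, which is the claim. The only real obstacle here is the exchange of limit and infinite sum, and it is precisely the uniform boundedness principle that removes it, by providing the summable envelope $2Ms_{j}$ needed to invoke dominated convergence.
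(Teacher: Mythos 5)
Your proof is correct. It differs from the paper's in execution, though both rest on the same two pillars: the Banach--Steinhaus uniform bound $\sup_n\|K_n\|<\infty$ and the reduction to rank-one pieces. The paper first proves the claim for a finite-rank operator $A=\sum_{k=1}^m\langle\cdot,\psi_k\rangle\phi_k$ and then handles general $A$ by an $\eta$-argument, approximating $A$ in trace norm by a finite-rank $A_\eta$ and absorbing the tail with the uniform bound. You instead expand $A$ in its full Schmidt decomposition, use $\|\langle\cdot,\phi_j\rangle\,(K_n-K)\psi_j\|_{\rm tr}=s_j\|(K_n-K)\psi_j\|$, and pass to the limit termwise by dominated convergence on the $\ell^1$ series, the summable envelope $2Ms_j$ coming from Banach--Steinhaus. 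Your route packages the paper's cut-off-the-tail step into the dominated convergence theorem, which makes the argument a single estimate; the paper's two-step version is marginally more flexible in that it only uses density of finite-rank operators in the trace class and never needs the canonical form of $A$. Either way the proof is complete: the only delicate point, interchanging the limit with the infinite sum, is exactly what your summable dominating sequence justifies.
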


\begin{proof}
First assume that $A$ is a finite rank operator having the form
$A=\sum_{k=1}^ m<\cdot, \psi_k>\phi_k$, where $\psi_k, \phi_k \in
H$. Since
$$\Vert A\Vert_{\tr}\leq \sum_{k=1}^m \Vert \phi_k\Vert \Vert \psi_k\Vert,$$
we have
\begin{equation}\label{eq:2.6}
\Vert (K_n-K)A\Vert_{\rm tr}\leq \sum_{k=1}^m\Vert
(K_n-K)\phi_k\Vert \Vert \psi_k\Vert\rightarrow 0,\,\,\
n\rightarrow \infty.
\end{equation}
The general case can be covered  by an approximation. Since $K_n$
converges strongly, it follows from the Banach-Streinhaus theorem
that $\mu ={\rm sup}_n\Vert K_n \Vert<\infty$. Let $\eta$ be an
arbitrary positive constant and let $A_\eta$ be a finite rank
operator such that $\Vert A-A_\eta\Vert_{\tr} \leq {\eta\over
2\mu}$.  We have
$$\Vert (K_n-K)A\Vert_{\tr}\leq \Vert (K_n-K)(A-A_\eta)\Vert_{\tr}+\Vert (K_n-K)A_\eta\Vert_{\tr}\leq \eta+ \Vert (K_n-K)A_\eta\Vert_{\tr}.$$
Next we apply (\ref{eq:2.6}) for the finite rank operator $A_\eta$
and obtain
$$\lim_{n\rightarrow \infty} \Vert (K_n-K)A\Vert_{\rm tr}\leq \eta,$$
which implies Proposition 1, since $\eta$ is arbitrary.
 \end{proof}
{\bf Proof of Theorem 1.}  Assume first that $V=\phi W$ where $\phi\in
 C^\infty_0(\R^2;\R)$ and $W, \partial_x W \in C^0(\R^2; \R) \cap L^\infty(\R^2;\R)$. Choose a function $\chi \in
C_0^{\infty}(\R^2)$ such that $\chi = 1$ for $|(x,y)|\leq 1$. For
$R > 0$ set
$$\chi_R(x, y) = \chi\Bigl(\frac{x}{R}, \frac{y}{R}\Bigr),$$
and introduce
$$B_R:=[\chi_R\partial_x,H]f(H)-[\chi_R\partial_x,H_0]f(H_0).$$
Here $[A,B]=AB-BA$ denotes the commutator of $A$ and $B$.
According to Lemma 3, we have  $$\tr
\Bigl([\chi_R\partial_x,H]f(H)\Bigr)= \tr
\Bigl([\chi_R\partial_x,H_0]f(H_0)\Bigr)=0.$$ Thus
\begin{equation} \label{eq:2.7}
\tr (B_R)=0.
\end{equation}
On the other hand, a simple calculus shows that
\begin{equation}\label{eq:2.8}
B_R=\chi_R\Bigl([\partial_x,H]f(H)-[\partial_x,H_0]f(H_0)\Bigr)+[\chi_R,H_0]\partial_x\Bigl(f(H)-f(H_0)\Bigr):=B^1_R+B_R^2,
\end{equation}
where we have used that $[\chi_R,H]=[\chi_R,H_0]$.

Since $[\partial_x,H]=1+\partial_xV$ and $[\partial_x,H_0]=1$, it
follows from Lemma 3, Lemma 4 and Proposition 1 that
 \begin{equation}\label{eq:2.9}
 \lim_{R\rightarrow \infty}\tr(B_R^1)=\tr \Bigl(f(H)-f(H_0)\Bigr)+ \tr\Bigl(\pv f(H)\Bigr).
 \end{equation}
Next we claim that
\begin{equation}\label{eq:2.10}
\lim_{R\rightarrow \infty} B_R^2=0.
\end{equation}
Using that $[\chi_R,H_0]={2\over R} (D_x\chi_R) (D_x-y)- {2\over
R} (D_y\chi_R) D_y+{1\over R^2} (\Delta\chi_R)$, we decompose
$B_R^2$ as a sum of three terms $B_R^2 = I^1_R+I_R^2+I_R^3$, where
$$I_R^1=-{2\over R} (D_x\chi_R) (D_x-y)\partial_x\Bigl(f(H)-f(H_0)\Bigr),$$
$$I_R^2=- {2\over R} (D_y\chi_R) D_y\partial_x\Bigl(f(H)-f(H_0)\Bigr),$$
$$I_R^3={1\over R^2} (\Delta\chi_R)\partial_x\Bigl(f(H)-f(H_0)\Bigr).$$
To treat $I_R^1$, we set $Q = H - x$ and write
$$I_R^1=-{2\over R} (D_x\chi_R) (D_x-y)(Q_0-\ii)^{-1}(H-\ii)\partial_x\Bigl(f(H)-f(H_0)\Bigr)$$
$$+{2\over R} (D_x\chi_R) [(D_x-y)(Q-\ii)^{-1},x]\partial_x\Bigl(f(H)-f(H_0)\Bigr)$$
$$ +{2\over R} x(D_x\chi_R) (D_x-y)(Q-\ii)^{-1}\partial_x\Bigl(f(H)-f(H_0)\Bigr).$$
The operators $[(D_x-y)(Q-\ii)^{-1},x]$ and $(D_x-y)(Q-\ii)^{-1}$
are bounded, while $\partial_x\Bigl (f(H)-f(H_0)\Bigr)$ and
$(H-\ii)\partial_x\Bigl(f(H)-f(H_0)\Bigr)$  are trace class
operators (see Lemma 4). On the other hand, ${2\over R}
(D_x\chi_R)$, ${2\over R} x(D_x\chi_R)$ converges strongly to
zero. Indeed, since $\chi(x,y)=1$ for $|(x,y)| \leq 1$, we
get
$$\int \Bigl|{x\over R}(D_x\chi_R)u\Bigr |^ 2dxdy\leq {\rm sup}_{(x,y)\in \R^2}|xD_x\chi(x,y)|\int_{\{|(x,y)|\geq R\}} |u|^ 2dxdy\rightarrow 0,\,\, R\rightarrow \infty,$$
for all $u\in L^2(\R^2)$. Applying Proposition 1, we conclude that
\begin{equation}\label{eq:2.11}
\lim_{R\rightarrow \infty} I_R^1=0.
\end{equation}
To deal with $I_R^2,\: I_R^3$, notice that the operators $D_y (Q-
\ii)^{-1}$ and $[D_y(Q - \ii)^{-1}, x]$ are bounded and we repeat
the above argument. Thus we deduce
 \begin{equation}\label{eq:2.12}
\lim_{R\rightarrow \infty} I_R^{j}=0,\,\,\,j=2,3.
\end{equation}
Consequently, (\ref{eq:2.11}) and (\ref{eq:2.12}) imply
(\ref{eq:2.10}) and the claim is proved. Now, combining
(\ref{eq:2.7}), (\ref{eq:2.8}), (\ref{eq:2.9}) and
(\ref{eq:2.10}), we obtain Theorem 1 in the case where $V$ satisfies the assumption of Lemma 2 and $\epsilon=1$.

\begin{prop} Assume that $V \in L^\infty(\R^2;\R)$ satisfies $(1.1)$.
Then for $z \notin \R,\: z' \notin \R$ the operators $(z - H)^{-1}V(z' - H)^{-1},\: V(z - H)^{-1}(z' - H)^{-1},\:(H - z)^{-1} - (H_0 -z)^{-1}$ are trace class and
\begin{eqnarray} \label{eq:2.13}
\|(z - H)^{-1} V(z' - H)^{-1}\|_{\tr}\:\leq C_1 |\Im z|^{-1}|\Im
z'|^{-1},\\ \nonumber \| V(z - H)^{-1}(z' - H)^{-1}\|_{\tr}\:\leq
C_1 |\Im z|^{-1}|\Im z'|^{-1}.
\end{eqnarray}
Moreover, if $g \in C_0^{\infty}(\R),$ then the operator $V g(H)$ is trace class.
\end{prop}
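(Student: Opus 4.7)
The plan is to factor $V = V_1 V_2$ with each $V_j$ satisfying
$$|V_j(x,y)| \leq C\langle x\rangle^{-1-\delta/2}\langle y\rangle^{-(1+\delta)/2}$$
(for instance $V_1=\mathrm{sgn}(V)|V|^{1/2}$ and $V_2=|V|^{1/2}$), and to reduce all the trace-class assertions to a Hilbert--Schmidt estimate of the form $\|V_j(z-H)^{-1}\|_{\mathrm{HS}}\leq C/|\Im z|$. For the unperturbed operator $H_0$, this HS bound follows from Weyl calculus: the symbol $p(x,y,\xi,\eta)=(\xi-By)^2+\eta^2+\epsilon x$ of $H_0$ satisfies $\int|z-p|^{-2}\,d\xi\,d\eta\leq C/|\Im z|$ uniformly in $(x,y)$ (a direct calculation after $\xi\mapsto\xi+By$ and passage to polar coordinates), so since $|V_j|^2\in L^1(\R^2)$ I get $\|V_j(z-H_0)^{-1}\|_{\mathrm{HS}}\leq C/|\Im z|^{1/2}$; the second resolvent identity $(z-H)^{-1}=(z-H_0)^{-1}-(z-H_0)^{-1}V(z-H)^{-1}$ together with $\|V\|_\infty<\infty$ then transports this bound to $H$.

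Once the HS estimate is in hand, the first line of (2.13) is immediate from the factorization
$$(z-H)^{-1}V(z'-H)^{-1}=\bigl[V_1(\bar z-H)^{-1}\bigr]^*\cdot\bigl[V_2(z'-H)^{-1}\bigr]$$
as a product of two Hilbert--Schmidt operators. For the second line of (2.13) the analogous factorization only yields a Hilbert--Schmidt (not trace-class) bound, so I approximate $V$ by $V_n=\chi_n V$ with $\chi_n\in C_0^\infty(\R^2)$ increasing monotonically to $1$: each $V_n$ has compact support, so (by the proof of Lemma~1, which really only uses boundedness and compact support of the weight) $V_n(H+\ii)^{-2}$ is trace class, and hence
$$V_n(z-H)^{-1}(z'-H)^{-1}=V_n(H+\ii)^{-2}\cdot(H+\ii)^2(z-H)^{-1}(z'-H)^{-1}$$
is trace class. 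A uniform-in-$n$ trace-norm bound of the correct order is obtained by first mollifying $V_{2,n}$ to $C_0^\infty$ and writing $V_{2,n}(z-H)^{-1}=(z-H)^{-1}V_{2,n}+(z-H)^{-1}[V_{2,n},H_0](z-H)^{-1}$: the leading piece yields an HS${\times}$HS product as above, and the commutator piece is of lower order and controlled by the same HS bound (the derivative of the mollified $V_{2,n}$ lands on a resolvent factor). Passing to the trace-norm limit gives the second line of (2.13). The difference $(H-z)^{-1}-(H_0-z)^{-1}=-(H_0-z)^{-1}V(H-z)^{-1}$ is then immediately trace class, the HS estimate being symmetric in $H$ and $H_0$.

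For the last assertion, fix any $z_0\notin\R$ and set $h(\lambda)=(z_0-\lambda)^2 g(\lambda)\in C_0^\infty(\R)$, so that $g(H)=(z_0-H)^{-2}h(H)$ and $Vg(H)=[V(z_0-H)^{-2}]\cdot h(H)$ is a trace-class operator (by the $z=z'=z_0$ case of (2.13)) composed with a bounded one.

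The main technical obstacle is the Hilbert--Schmidt estimate for $V_j(z-H_0)^{-1}$: the Stark term $\epsilon x$ destroys ellipticity as $|x|\to\infty$, so generic pseudodifferential order-function arguments do not apply off the shelf; the explicit momentum integrability of $|z-p|^{-2}$, possible only because $p$ is quadratic in $(\xi,\eta)$ plus linear in $x$, is what rescues the estimate. The secondary difficulty is the commutator step for the $V$-on-the-outside form, which forces the mollification/approximation detour since the raw factor $V_2=|V|^{1/2}$ is not smooth.
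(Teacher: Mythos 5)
The central step of your argument---the Hilbert--Schmidt bound for $V_j(z-H_0)^{-1}$---is exactly where the real difficulty of this proposition lies, and your justification of it does not hold. The computation $\int|z-p|^{-2}\,d\xi\,d\eta\leq C|\Im z|^{-1}$ (with $p=(\xi-By)^2+\eta^2+\epsilon x$) is correct, but it only controls the momentum-space $L^2$ norm of the function $V_j(x,y)\,(z-p(x,y,\xi,\eta))^{-1}$; it does not bound $\Vert V_j(z-H_0)^{-1}\Vert_{\rm HS}$, because $(z-H_0)^{-1}$ is neither a Fourier multiplier (so the Kato--Seiler--Simon bound for $a(X)b(D)$ does not apply) nor the Weyl quantization of $(z-p)^{-1}$. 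A parametrix argument cannot repair this cheaply: in $(z-H_0)\,\mathrm{Op}^w\bigl((z-p)^{-1}\bigr)=I+R$ the remainder contains terms of the type $(\xi-By)^2(z-p)^{-3}$, which are unbounded on the non-compact region where $p\approx\Re z$ and $|\xi-By|$ is large---precisely the loss of ellipticity caused by the Stark term that you yourself flag; moreover your left factor $V_j=|V|^{1/2}$ is merely bounded and measurable, so no symbolic composition applies to it anyway. In the paper this estimate is the content of Lemma 8 in the Appendix: $H_0$ is conjugated by an explicit unitary to $D_y^2+y^2+x-\frac14$, the smooth weights $\la x\ra^{-1-\delta/2}\la y\ra^{-(1+\delta)/2}$ become pseudodifferential operators mixing position and momentum, and the Hilbert--Schmidt and trace norms are obtained only after splitting phase space according to whether $\la y,\eta\ra^{k}$ is dominated by $|\eta^2+y^2+x+\ii|$ or not. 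Your proposal assumes this estimate, so the main point of the proof is missing.

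Two further remarks. Even granting the HS bound for $H_0$, transporting it to $H$ by the second resolvent identity costs an extra factor $\Vert V\Vert_\infty|\Im z|^{-1}$ in each factor, so your direct-in-$z$ factorization yields trace bounds of order $|\Im z|^{-3/2}|\Im z'|^{-3/2}$ rather than the stated $|\Im z|^{-1}|\Im z'|^{-1}$; the paper avoids this by proving all trace-class statements first at $z=z'=-\ii$, with the smooth weights $g_\delta,f_\delta$ outside and the bounded rough factors $g_\delta^{-1}Vg_\delta^{-1}$, $Vf_\delta^{-1}$ inside, and only then applying the first resolvent equation, which produces exactly one factor $|\Im z|^{-1}$ per variable. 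Finally, your mollification/commutator detour for $V(z-H)^{-1}(z'-H)^{-1}$ is both incomplete as written (uniformity in the cutoff and mollification parameters, and the boundedness of terms like $(\partial_xV_{2,n})(D_x-By)(z-H)^{-1}$, are not addressed) and unnecessary: once one knows $V(H_0+\ii)^{-2}$ is trace class and $V(H_0+\ii)^{-1}$, $(H_0+\ii)^{-1}V$ are Hilbert--Schmidt, the $V$-on-the-left statement follows purely algebraically from resolvent identities, as in the paper.
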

\begin{proof}
Set $g_\delta(x,y)=\la x\ra^{-1-{\delta\over 2}}\la y\ra^{-{1+\delta\over 2}}$ and 
$ f_\delta(x,y)=\la x\ra^{-2-\delta}\la y\ra^{-1-\delta}$ ,
where $\delta$ is the constant in (1.1).
According to Lemma 8 in the Appendix,  $g_\delta(H_0+\ii)^{-1}$,
$(H_0+\ii)^{-1}g_\delta$ are Hilbert-Schmidt operators and $f_\delta (H_0+\ii)^{-2}$ is a trace one. 
Since $g_\delta^{-1}Vg_\delta^{-1}, \, Vf_\delta^{-1}\in L^\infty$, it follows that
$$(H_0+\ii)^{-1}V(H_0+\ii)^{-1}=(H_0+\ii)^{-1}g_\delta [g_\delta^{-1}Vg_\delta^{-1}]g_\delta(H_0+\ii)^{-1}$$
and $V(H_0+\ii)^{-2}$ are trace class operators. Next we write
$$(H + \ii)^{-1} - (H_0 + \ii)^{-1} = -(H_0 + \ii)^{-1}V(H_0 + \ii)^{-1} + (H + \ii)^{-1} V (H_0 + \ii)^{-1} V (H_0 + \ii)^{-1}$$
and conclude that $(H + \ii)^{-1} - (H_0 + \ii)^{-1}= -(H + \ii)^{-1}V(H_0 + \ii)^{-1}$ is trace class. 
Now consider the following equalities
 $$(\ii + H)^{-1}V(\ii + H)^{-1}=(\ii + H_0)^{-1}V(\ii + H_0)^{-1}+(\ii + H)^{-1}V(\ii + H_0)^{-1}V(\ii+H_0)^{-1}+$$
 $$(\ii + H_0)^{-1}V(\ii + H_0)^{-1}V(\ii +H)^{-1}+(\ii + H)^{-1}V(\ii + H_0)^{-1}V(\ii + H_0)^{-1}V(\ii + H)^{-1}$$
 and
$$V(H + \ii)^{-2} = V(H_0 + \ii)^{-2} - V(H_0 + \ii)^{-1} (H + \ii)^{-1} V (H_0 + \ii)^{-1} - V(H +\ii)^{-1} V (H_0 + \ii)^{-1} (H + \ii)^{-1}.$$
By using the trace class properties established above, we get   (\ref{eq:2.13}) for $z=z'=-i$. By applying the first resolvent equation 
$$(H-z)^{-1}=
(H+\ii)^{-1}+(\ii-z)(H+ \ii)^{-1}(H-z)^{-1},$$
 we obtain the general case.

 To examine $Vg(H)$, consider the function $h(x) = (x + \ii)^2g(x).$ Then $Vg(H) = V(H + \ii)^{-2}h(H)$ and since $V(H + \ii)^{-2}$ is trace class, we obtain the result.
\end{proof}
 For $R>0$ introduce
$$H_R:=H_0+\chi_R(x,y)V(x,y),$$
where $\chi_R(x,y)=\chi({x\over R}, {y\over R})$ with
$\chi\in C^\infty_0(\R^2)$ such that  $\chi=1$ in a neighborhood of $|(x, y)| \leq 1.$
\begin{rem} The result of Proposition $2$ concerning the trace class property of $(H-z)^{-1} - (H_0-z)^{-1},\: \Im z \neq 0,$ improves considerably Proposition $2$ in \cite{DP2}, where much more regular potentials have been examined. On the other hand, if the potential $V$ satisfies $(1.1)$ and $V, \partial_x V \in C^0(\R^2; \R)\cap L^{\infty} (\R^2; \R),$ then the statements of Proposition $2$ hold for the operators $(z - H_R)^{-1}V(z' - H)^{-1},\: z \notin \R, z' \notin \R$.
\end{rem}
The proof of Theorem 1 in the general case will be a simple
consequence of the following
\begin{lem} Let $V(x, y)$  be as in Theorem 1. Then for $f\in C^\infty_0(\R)$ we have
 \begin{equation}\label{eq:2.14}
 \lim_{R\rightarrow \infty} \tr \Bigl(f(H_R)-f(H)\Bigr)=0,
 \end{equation}
  \begin{equation}\label{eq:2.15}
\lim_{R\rightarrow \infty} \tr \Bigl (\partial_x(\chi_RV)
f(H_R)\Bigr)= \tr \Bigl(\partial_xV f(H)\Bigr).
\end{equation}
\end{lem}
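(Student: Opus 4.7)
The plan is to combine the Helffer--Sj\"ostrand representation of $f(H)$ and $f(H_R)$ with a variant of Proposition~$2$ in which the decay exponent $\delta$ of $(1.1)$ is replaced by a strictly smaller $\delta'\in(0,\delta)$, so that the resulting trace-norm estimates acquire a factor vanishing as $R\to\infty$. Set $g_{\delta'}(x,y)=\la x\ra^{-1-\delta'/2}\la y\ra^{-(1+\delta')/2}$ and $f_{\delta'}(x,y)=\la x\ra^{-2-\delta'}\la y\ra^{-1-\delta'}$, and put $W_R:=(\d_x\chi_R)V+(\chi_R-1)\pv$. Using $(1.1)$ for both $V$ and $\pv$ with exponent $\delta$, and noting that $1-\chi_R$ is supported in $\{|(x,y)|\ge R\}$ while $\d_x\chi_R$ has amplitude $\mathcal{O}(1/R)$, one checks that
$$
\eta_R:=\Vert g_{\delta'}^{-1}(1-\chi_R)V g_{\delta'}^{-1}\Vert_\infty,\qquad \eta_R':=\Vert W_R f_{\delta'}^{-1}\Vert_\infty
$$
both tend to $0$ as $R\to\infty$, because each expression contains a spare decay factor $\la x\ra^{-(\delta-\delta')}\la y\ra^{-(\delta-\delta')}$ that vanishes uniformly on the relevant region.

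For $(2.14)$, the second resolvent identity yields $(z-H_R)^{-1}-(z-H)^{-1}=-(z-H_R)^{-1}(1-\chi_R)V(z-H)^{-1}$, and the Helffer--Sj\"ostrand formula with an almost analytic extension $\tilde f$ satisfying $\overline\d\tilde f(z)=\mathcal{O}(|\Im z|^\infty)$ gives
$$
f(H_R)-f(H)=\frac{1}{\pi}\int\overline\d\tilde f(z)\,(z-H_R)^{-1}(1-\chi_R)V(z-H)^{-1}\,L(dz).
$$
Inserting $(H_0+\ii)^{-1}(H_0+\ii)$ on either side of $(1-\chi_R)V$ and applying the factorization from the proof of Proposition~$2$ with $g_{\delta'}$ in place of $g_\delta$, one obtains
$$
\Vert(z-H_R)^{-1}(1-\chi_R)V(z-H)^{-1}\Vert_{\tr}\le C\eta_R\,\frac{(|z|+1)^2}{|\Im z|^2},
$$
since $\Vert(z-H_R)^{-1}(H_0+\ii)\Vert$ and $\Vert(H_0+\ii)(z-H)^{-1}\Vert$ are both $\le C(|z|+1)/|\Im z|$ uniformly in $R$ by the first resolvent identity and $\Vert\chi_R V\Vert_\infty\le\Vert V\Vert_\infty$. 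The fast vanishing of $\overline\d\tilde f$ on $\R$ absorbs the $|\Im z|^{-2}$ singularity, so integration yields $\Vert f(H_R)-f(H)\Vert_{\tr}\le C'\eta_R\to 0$, which proves $(2.14)$.

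For $(2.15)$, decompose
$$
\d_x(\chi_RV)f(H_R)-\pv f(H)=W_R f(H_R)+\pv\bigl[f(H_R)-f(H)\bigr].
$$
Both operators on the right are trace class: $\pv[f(H_R)-f(H)]$ by boundedness of $\pv$ and the previous step, and $W_R f(H_R)=[W_R(H_R+\ii)^{-2}]g(H_R)$ with $g(\lambda)=(\lambda+\ii)^2 f(\lambda)\in C_0^\infty(\R)$ by Remark~$1$ and Proposition~$2$ applied to $H_R$. The second term is estimated by $\Vert\pv\Vert_\infty\Vert f(H_R)-f(H)\Vert_{\tr}\to 0$ via the previous paragraph. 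For the first, factor $W_R(H_R+\ii)^{-2}=(W_Rf_{\delta'}^{-1})\cdot f_{\delta'}(H_R+\ii)^{-2}$, so that
$$
\Vert W_Rf(H_R)\Vert_{\tr}\le\eta_R'\,\Vert f_{\delta'}(H_R+\ii)^{-2}\Vert_{\tr}\,\Vert g\Vert_\infty,
$$
and the right-hand side tends to $0$ as $R\to\infty$.

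The main technical obstacle is the uniform-in-$R$ trace-norm bound on $f_{\delta'}(H_R+\ii)^{-2}$. This is obtained by running the decomposition at the end of the proof of Proposition~$2$ with $\chi_RV$ and $H_R$ in place of $V$ and $H$: the intermediate trace-norm estimates involve only $\Vert V\Vert_\infty$, $\Vert Vf_{\delta'}^{-1}\Vert_\infty$ and $\Vert(H_0+\ii)^{-1}g_{\delta'}\Vert_{HS}$, which are all independent of $R$. With this uniform bound in hand, the convergences $(2.14)$ and $(2.15)$ follow as described above.
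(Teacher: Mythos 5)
Your argument is correct, and it takes a genuinely different route from the paper. The paper splits $f(H_R)-f(H)=J_R+K_R$ with $g(x)=(x+\ii)f(x)$, kills $\tr J_R$ by Proposition 1 (strong convergence of $(H_R-\ii)^{-1}(1-\chi_R)$ against the trace-class operator $Vf(H)$), and treats $\tr K_R$ by Helffer--Sj\"ostrand plus cyclicity of the trace and dominated convergence with the bounds (2.13); for (2.15) it uses the three-term decomposition (2.18) and Proposition 1 again. You instead sacrifice a bit of the decay, $\delta'<\delta$, to make the estimates quantitative: the spare factor $\la x\ra^{\delta'-\delta}\la y\ra^{\delta'-\delta}$ on $\{|(x,y)|\ge R\}$ gives $\eta_R,\eta_R'={\mathcal O}(R^{-(\delta-\delta')})$, and via the factorization $(z-H_R)^{-1}(1-\chi_R)V(z-H)^{-1}=[(z-H_R)^{-1}(H_0+\ii)][(H_0+\ii)^{-1}g_{\delta'}][g_{\delta'}^{-1}(1-\chi_R)Vg_{\delta'}^{-1}][g_{\delta'}(H_0+\ii)^{-1}][(H_0+\ii)(z-H)^{-1}]$ you obtain $\Vert f(H_R)-f(H)\Vert_{\rm tr}\to 0$, i.e.\ convergence in trace \emph{norm}, which is strictly stronger than what the paper proves and which then makes the term $\pv\bigl(f(H_R)-f(H)\bigr)$ in (2.15) trivial; no strong-convergence lemma and no dominated convergence are needed, and you get a rate. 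The only step you leave compressed is the uniform-in-$R$ bound on $\Vert f_{\delta'}(H_R+\ii)^{-2}\Vert_{\rm tr}$; it does follow as you claim, e.g.\ from $f_{\delta'}(H_R+\ii)^{-2}=f_{\delta'}(H_0+\ii)^{-2}\bigl[I-\chi_RV(H_R+\ii)^{-1}\bigr]-\bigl[f_{\delta'}(H_0+\ii)^{-1}\bigr]\bigl[\chi_RV(H_0+\ii)^{-1}\bigr]\bigl[I-\chi_RV(H_R+\ii)^{-1}\bigr](H_R+\ii)^{-1}$, whose factors are controlled by $\Vert f_{\delta'}(H_0+\ii)^{-2}\Vert_{\rm tr}$, the Hilbert--Schmidt norms from Lemma 8 (note $f_{\delta'}=g_{\delta'}^2$) and $\Vert V\Vert_\infty$, all independent of $R$; spelling this identity out would make the proof complete.
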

\begin{proof}
Let $g(x)=(x+\ii) f(x)$ be as above. We decompose
$$
f(H_R)-f(H)=\Bigl((H_R+\ii)^{-1}-(H+\ii)^{-1}\Bigr)g(H)+
 (H_R +\ii)^{-1}\Bigl(g(H_R)-g(H)\Bigr)= J_R+ K_R.
$$
 From the first resolvent identity, we obtain
 $$J_R=(H_R-\ii)^{-1}(1-\chi_R)V(H+ \ii)^{-1}g(H)=(H_R-\ii)^{-1}(1-\chi_R)Vf(H).$$
 According to Proposition 2, the operator $Vf(H)$ is trace class and $ (H_R-\ii)^{-1}(1-\chi_R)$ converges strongly to zero. Then from Proposition 1 it follows that
  \begin{equation}\label{eq:2.16}
  \lim_{R\rightarrow \infty} \tr J_R=0.
  \end {equation}

 To treat $\tr K_R$, as in the proof of Lemma 4, we use the Helffer-Sj\"ostrand formula and write
$${\rm tr}\:K_R=-{1\over \pi} \int \overline\partial \tilde g(z){\rm tr}\Bigl((H_R+\ii)^{-1}\Bigl( (z-H_R)^{-1}-(z-H)^{-1}\Bigr)\Bigr) L(dz)$$
$$={1\over \pi} \int \overline\partial \tilde g(z){\rm tr}\Bigl((H_R+\ii)^{-1}(z-H_R)^{-1}(1-\chi_R)V(z-H)^{-1}\Bigr) L(dz).$$
By cyclicity of the traces we obtain
$${\rm tr}\Bigl((H_R+\ii)^{-1}(z-H_R)^{-1}(1-\chi_R)V(z-H)^{-1}\Bigr)={\rm tr}\Bigl((z-H_R)^{-1}(1-\chi_R)V(z-H)^{-1}(H_R+\ii)^{-1}\Bigr)$$
$$= {\rm tr}\Bigl((z-H_R)^{-1}(1-\chi_R)V(z-H)^{-1}(H+\ii)^{-1}\Bigr)$$
$$+{\rm tr}\Bigl((1 -\chi_R)V(H_R+\ii)^{-1}  (z-H_R)^{-1}(1-\chi_R)V(z-H)^{-1}(H+\ii)^{-1}\Bigr).$$
Now notice that for $z \notin \R$ the operators $(1
-\chi_R)V(H_R+\ii)^{-1}  (z-H_R)^{-1}(1-\chi_R)$ and
$(z-H_R)^{-1}(1-\chi_R)$ converge strongly to zero. On the other
hand, from Proposition 2 we deduce that the operator $V(z-
H)^{-1}(\ii + H)^{-1}$ is trace class. Thus for $z \notin \R$,
we conclude that the integrand converge to 0 as $R \to \infty.$ An
application of the Lebesgue convergence domination theorem
combined with the estimates (\ref{eq:2.13}) yield
\begin{equation}\label{eq:2.17}
  \lim_{R\rightarrow \infty} {\rm tr}\:K_R=0.
  \end {equation}
  Putting together (\ref{eq:2.16}) and (\ref{eq:2.17}), we obtain (\ref{eq:2.14}).

  Next, we pass to the proof of (\ref{eq:2.15}). A simple calculus shows that
  \begin{equation}\label{eq:2.18}
  \partial_x(\chi_RV) f(H_R)=\partial_x(\chi_RV) (f(H_R)-f(H))+{1\over R}(\partial_x\chi)_RVf(H)+(\chi_R\partial_xV f(H)).
   \end {equation}
Repeating the same arguments as in the proof of (\ref{eq:2.14}),
we show that
\begin{equation}\label{eq:2.19}
\lim_{R\rightarrow\infty} \tr\Bigl(\partial_x(\chi_RV)
(f(H_R)-f(H))\Bigr)=0.
 \end {equation}
On the other hand, since ${1\over R}(\partial_x\chi)_R$ (resp.
$\chi_R$) converges strongly to zero (resp.1), it follows from
Proposition 1 that
$$
\lim_{R\rightarrow \infty} \tr \Bigl({1\over
R}(\partial_x\chi)_RVf(H)\Bigr)=0,\,\,\,\lim_{R\rightarrow
\infty}\tr \Bigl(\chi_R\pv f(H)\Bigr)={\rm tr}\Bigl(\pv f(H)\Bigr),
 $$
 which together with (\ref{eq:2.18}) and (\ref{eq:2.19}) yield (\ref{eq:2.15}).
\end{proof}

{\bf End of the proof of Theorem 1.}
Applying Theorem 1 to $H_R$, we obtain : 
 $$\tr \Bigl[ f(H_R) - f(H)\Bigr]+ \tr \Bigl[ f(H) - f(H_0)\Bigr]= \tr \Bigl[ f(H_R) - f(H_0)\Bigr] = - \tr \Bigl(\d_x (\chi_RV )f(H)\Bigr),$$
 and an application of  Lemma 5 implies Theorem 1.

\section{Mourre estimate and embedded eigenvalues}


Consider the operator
$$Q = (D_x - By)^2 + D_y^2 + V(x, y),$$
 and set $\la x \ra = (1 + |x|^2)^{1/2},\: \la D_x \ra = (1 + D_x^2)^{1/2}.$

\begin{lem}
Assume that $V, \partial_x V \in C^0(\R^2; \R) \cap L^\infty({\R}^2;\R)$ and let $\Vert \1b_{\{\vert x\vert+\vert y \vert>R\}}(x,y) \partial_x V\Vert_{L^\infty }\rightarrow 0$ for $R\longrightarrow +\infty.$ Then for all $f \in
C^\infty_0 (\R)$, the operator $ f(H) \pv f(H)$ is compact.
\end{lem}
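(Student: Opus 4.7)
The plan is to exploit the vanishing-at-infinity assumption on $\partial_x V$ to reduce compactness of $f(H)\pv f(H)$ to the local trace-class statement already supplied by Lemma~3. Fix a cutoff $\chi\in C_0^\infty(\R^2)$ with $\chi=1$ on $\{|(x,y)|\leq 1\}$ and set $\chi_R(x,y)=\chi(x/R,y/R)$. Then decompose
$$f(H)\,\pv\, f(H) = f(H)\,\chi_R\pv\, f(H) + f(H)\,(1-\chi_R)\pv\, f(H).$$

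For the first, compactly supported, piece I would apply Lemma~3 with $\psi=\chi_R\in C_0^\infty(\R^2)$ to conclude that $\chi_R f(H)$ is trace class; the symmetric statement (via Lemma~1~(ii), or by passing to adjoints after replacing $f$ by $\bar f$) gives that $f(H)\chi_R$ is trace class as well. Since $\pv\in L^\infty$ and $f(H)$ is bounded by the spectral theorem,
$$f(H)\chi_R\pv f(H) = \bigl(f(H)\chi_R\bigr)\cdot\bigl(\pv\, f(H)\bigr)$$
is compact as the composition of a compact operator with a bounded one.

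For the second term I would use the hypothesis $\|\1b_{\{|x|+|y|>R\}}\pv\|_{L^\infty}\to 0$: since $\mathrm{supp}(1-\chi_R)\subset\{|(x,y)|\geq R\}$, one has
$$\bigl\|f(H)(1-\chi_R)\pv f(H)\bigr\| \leq \|f(H)\|^2\,\bigl\|(1-\chi_R)\pv\bigr\|_{L^\infty} \longrightarrow 0\qquad (R\to\infty).$$
Hence $f(H)\pv f(H)$ is the operator-norm limit of the compact operators $f(H)\chi_R\pv f(H)$, and therefore itself compact.

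I expect no serious obstacle here: all the technical work (the trace-class property $\chi_R f(H)\in\mathcal{L}_1$, which rests on the $S^0(\langle\xi-y,\eta\rangle^{-2})$ bound for the symbol of $(Q_0+\ii)^{-1}$ and the weighted resolvent expansion) is already encapsulated in Lemma~1 and Lemma~3. The remaining ingredient is only the soft cutoff plus the norm-limit argument, for which the decay hypothesis on $\pv$ at infinity is exactly what is needed.
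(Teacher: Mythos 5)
Your argument is correct and is essentially the paper's own proof: the authors likewise split off a compactly supported cutoff $\varphi_n$, use Lemma~3 to get the trace-class (hence compact) localized piece $f(H)\varphi_n\,\partial_x V\,f(H)$, bound the remainder in operator norm by $\|f(H)\|^2\,\|(1-\varphi_n)\partial_x V\|_{\infty}$, and conclude by norm-closedness of the compact operators. The only cosmetic difference is that you factor the localized term as $\bigl(f(H)\chi_R\bigr)\bigl(\partial_x V\,f(H)\bigr)$ and invoke the adjoint of $\psi f(H)$ (or Lemma~1~(ii)), which is a harmless rearrangement of the same ingredients.
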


\begin{proof}
Let $\varphi(x,y)  \in C^\infty_0 (\R^2)$ be equal to one near zero. Set $\varphi_n(x,y)=\varphi({x\over n},{y\over n})$. 
According to  Lemma 3, the operator $f(H)\varphi_n \partial_xVf(H)$ is trace class. The
set of compact operators is closed with respect to the norm $\|
.\|_{\mathcal {\mathcal L}(L^2)}$ and the lemma follows from the obvious estimate

$$\begin{array}{l}
\| f(H) (1-\varphi_n) \partial_x V \, f(H) \|_{\mathcal L(L^2)}
\leq \| f^2(H) \|_{\mathcal L (L^2)} \| (1-\varphi_n) \partial_x V
\|_\infty.
\end{array}$$
\end{proof}

\begin{thm}
Let $[a, b] \subset \R.$  Under the assumptions of Lemma $6$, there
exists a compact operator $K$ such that
\begin{equation}\label{eq:3.2}
\1b_{[a, b]}(H) [\partial_x, H] \, \1b_{[a,b]}(H) \geq
\epsilon \1b_{[a,b]}(H) + \1b_{[a,b]}(H)K \1b_{[a,b]}(H).
\end{equation}
\end{thm}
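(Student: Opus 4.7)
The plan is a direct commutator computation followed by a soft localization argument using Lemma 6. First I would compute $[\partial_x, H]$ explicitly. Since $\partial_x$ commutes with the magnetic kinetic part $(D_x - By)^2 + D_y^2$ (the vector potential depends only on $y$), while $[\partial_x, \epsilon x] = \epsilon$ and $[\partial_x, V] = \partial_x V$ as a multiplication operator, one gets
\begin{equation*}
[\partial_x, H] = \epsilon + \partial_x V.
\end{equation*}
Because both terms on the right are bounded and self-adjoint, the sandwich $\1b_{[a,b]}(H)[\partial_x, H]\1b_{[a,b]}(H)$ is a well-defined bounded self-adjoint operator, and the inequality in (\ref{eq:3.2}) has an unambiguous meaning.

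Next I would localize the potential contribution via functional calculus. Pick $f \in C_0^\infty(\R)$ with $f \equiv 1$ on $[a,b]$, so that $f(H)\1b_{[a,b]}(H) = \1b_{[a,b]}(H) = \1b_{[a,b]}(H) f(H)$. Inserting these factors on either side of $\partial_x V$ yields
\begin{equation*}
\1b_{[a,b]}(H)\,\partial_x V\,\1b_{[a,b]}(H) = \1b_{[a,b]}(H)\bigl(f(H)(\partial_x V) f(H)\bigr)\1b_{[a,b]}(H) = \1b_{[a,b]}(H)\, K\, \1b_{[a,b]}(H),
\end{equation*}
with $K := f(H)(\partial_x V) f(H)$. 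Lemma 6, whose hypotheses are assumed, asserts precisely that $K$ is compact. Combining this with the commutator identity gives
\begin{equation*}
\1b_{[a,b]}(H)[\partial_x, H]\1b_{[a,b]}(H) = \epsilon\,\1b_{[a,b]}(H) + \1b_{[a,b]}(H)\, K\, \1b_{[a,b]}(H),
\end{equation*}
which is an equality, hence a fortiori the claimed inequality (\ref{eq:3.2}).

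There is no genuine obstacle: all of the real analytic work has been absorbed into Lemma 6 via the decay assumption on $\partial_x V$. The only care needed is in justifying the commutator identity: it holds strongly on $C_0^\infty(\R^2)$, and since $\partial_x V$ is bounded while $\1b_{[a,b]}(H) L^2$ lies in the domain of $H$, the form identity extends by continuity to give the desired operator inequality on the range of the spectral projection.
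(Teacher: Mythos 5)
Your proposal is correct and follows essentially the same route as the paper: compute $[\partial_x,H]=\epsilon+\partial_x V$, insert a cut-off $f\in C_0^\infty(\R)$ with $f\equiv 1$ on $[a,b]$ so that $\1b_{[a,b]}(H)\partial_x V\,\1b_{[a,b]}(H)=\1b_{[a,b]}(H)f(H)\partial_x V f(H)\1b_{[a,b]}(H)$, and take $K=f(H)\partial_x V f(H)$, which is compact by Lemma 6. The only difference is your added (harmless) remarks on domains and the form interpretation, which the paper leaves implicit.
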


\begin{proof}
Since the operator $\partial_x$ commutes with $(D_x-By)$ and
$D_y^2$, we have $[\partial_x,H]=\epsilon+\partial_xV$. Consequently,
\begin{eqnarray}\label{3.3}
\1b_{[a,b]}(H)[\partial_x,H]\1b_{a,b]}(H)=\epsilon\1b_{[a,b]}(H) +\1b_{[a,
b]}(H)\partial_{x}V\1b_{[a,b]}(H)\\ \nonumber
 =\epsilon \1b_{[a,b]}(H) + \1b_{[a,b]}f(H)\partial_x V f(H)\1b_{[a,b]}(H),
\end{eqnarray}
where $f \in C_0^{\infty}(\R)$ is a cut-off function such that $f= 1$ on $[a,b]$. Thus, Theorem 3 follows from Lemma 6.
\end{proof}
The use of commutators with the operator $\partial_x$ is well known for the analysis of the operator without magnetic field ($B = 0$) (see the pioneering work \cite{BCD} and \cite{AMG} for a more complete list of references). On the other hand, to treat crossed magnetic and electric fields we need Lemma 1 and Lemma 3. 

\begin{cor} In addition to the  assumptions of Theorem $3$ assume that $\partial_x^2 V\in C^0(\R^2) \cap L^\infty(\R^2)$. Then the point spectrum of $H$ in $[a,b]$ is finite and with finite multiplicity. Moreover, the singular continuous spectrum of $H$ is empty.
\end{cor}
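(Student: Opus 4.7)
\medskip

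\noindent\textbf{Plan of proof.} I would fit this into the standard Mourre commutator framework with conjugate operator $A = D_x = -\ii \partial_x$. First, the unitary group $e^{\ii tA}$ acts on $L^2(\R^2)$ as translation in the $x$-variable, and the explicit form of $H$ shows that this group preserves the domain of $H$: translation commutes with $D_x - By$ and $D_y^2$ and changes $\epsilon x + V(x,y)$ only by the bounded difference $V(x+t,y) - V(x,y)$ plus the constant $\epsilon t$. The identity $[\partial_x, H] = \epsilon + \partial_x V$ then shows that $\ii[A,H]$ extends to a bounded operator by $\partial_x V \in L^\infty$, while the additional hypothesis $\partial_x^2 V \in L^\infty$ makes the iterated commutator $[\ii[A,H], A] = \ii \partial_x^2 V$ bounded as well. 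Together these place $H$ in the class $C^2(A)$, allowing the abstract consequences of Mourre theory to be invoked.

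With the Mourre estimate (\ref{eq:3.2}) available on the arbitrary interval $[a,b]$, finiteness of the point spectrum follows from the virial theorem. Suppose for contradiction that $H$ has infinitely many eigenvalues, counted with multiplicity, in $[a,b]$; pick an orthonormal sequence of eigenfunctions $\{\psi_n\}$ with $H\psi_n = \lambda_n \psi_n$, $\lambda_n \in [a,b]$. Then $\psi_n \rightharpoonup 0$ weakly, and the virial theorem (valid under the $C^1(A)$ regularity already established) gives $\langle \psi_n, \ii[H,A]\psi_n\rangle = 0$. Since $\1b_{[a,b]}(H)\psi_n = \psi_n$, inserting this into (\ref{eq:3.2}) yields
$$0 = \langle \psi_n, \ii[H, A]\psi_n\rangle \geq \epsilon + \langle \psi_n, K\psi_n\rangle.$$
Compactness of $K$ together with the weak convergence $\psi_n \rightharpoonup 0$ forces $\langle \psi_n, K\psi_n\rangle \to 0$, producing the contradiction $0 \geq \epsilon > 0$. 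Hence the point spectrum of $H$ in $[a,b]$ consists of finitely many eigenvalues, each of finite multiplicity.

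The absence of singular continuous spectrum is the other standard output of Mourre theory: under the $C^2(A)$ regularity verified above and the strict Mourre estimate with compact remainder on $[a,b]$, one obtains the limiting absorption principle for $(H - \lambda \mp \ii 0)^{-1}$ in the weighted spaces associated with $A$ on any compact subinterval of $(a,b)$ disjoint from the (finitely many) eigenvalues. This implies that the spectrum of $H$ on $(a,b)$ is purely absolutely continuous apart from those eigenvalues. Since $[a,b] \subset \R$ was arbitrary, $H$ has no singular continuous spectrum.

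I expect the only mildly delicate step to be the verification of the regularity class $H \in C^2(A)$ (stability of $D(H)$ under the translation group and the sense in which the iterated commutators extend to bounded operators); once this is in place, the finiteness of the point spectrum and the absence of singular continuous spectrum follow from the Mourre machinery in essentially textbook fashion.
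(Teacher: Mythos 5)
Your proposal is correct and follows essentially the same route as the paper: take $A = D_x$ as conjugate operator, verify the invariance of $D(H)$ under the translation group $e^{\ii\alpha A}$ and the boundedness of the first and second commutators (using $\partial_x V,\ \partial_x^2 V \in L^\infty$), and then combine the Mourre estimate of Theorem 3 with abstract Mourre theory. The only difference is cosmetic: the paper cites Mourre's main result for both conclusions, while you unpack the virial-theorem argument for the finiteness of the point spectrum, which is precisely the standard internal proof of that part of the abstract theorem.
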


\begin{proof}
Set $A=D_x$ and let $\alpha\in\R$. The explicit formula
$$e^{\ii\alpha A}(H+\ii)^{-1}=(e^{\ii\alpha A} He^ {-\ii\alpha A}+\ii)^{-1}e^{\ii\alpha A}=(H+\epsilon\alpha+V(x+\alpha,y)-V(x, y)+\ii)^{-1}e^{\ii\alpha A}$$
shows that  $e^{\ii\alpha A}$ leaves $D(H)$ invariant. On the other hand, since
$$\Vert He^{\ii\alpha A} (H+\ii)^{-1} \psi\Vert=\Vert e^{-\ii\alpha A} He^{\ii\alpha A} (H+\ii)^{-1} \psi\Vert$$
$$= \big\Vert \Bigl(H -\epsilon \alpha +V(x -\alpha, y)-V(x,y)\Bigr)(H+\ii)^{-1} \psi\big\Vert,$$
we deduce that  for each $\varphi\in D(H)$
$${\rm sup}_{\vert \alpha\vert <1} \Vert H e^{\ii\alpha A} \varphi\Vert<\infty.$$
Combining this with the fact $\ii[A,H]=\epsilon+\partial_x V$, $[A,[A,H]]=-\partial_x^2V$ and using (3.1), we conclude that the self-adjoint operator $A$ is a conjugate operator for $H$ at every $E \in \R$ in the sense of \cite{M}. Consequently, Corollary 1 follows from the main result in \cite{M} (see also \cite{AMG}, \cite{G}).

\end{proof}

\begin{rem}

For any sign-definite and bounded potential $V(x, y)$ such that $|V(x, y)| \to 0$ as $|x| + |y| \to \infty$ sufficiently fast in \cite{RW} and \cite{MR} it was established that  for $\epsilon=0$ the potential $V$ creates   an infinite number of
eigenvalues of $Q$  which accumulate to Landau levels. 
The above corollary  shows that only a finite number of these eigenvalues 
 may survive in the presence of a non vanishing constant electric field. In general, the problem of absence of embedded eigenvalues  when $\epsilon \neq 0$ remains open and this is an interesting conjecture.

\end{rem}

For a fixed value of $\epsilon\not=0$,  the following result shows that there are potentials for which  $H$ has absolutely continuous spectrum without embedded eigenvalues.

\begin{cor}
Fix $\epsilon>0$.  Assume that $\partial_x^\alpha V \in C^0(\R^2; \R) \cap L^ \infty(\R^ 2;\R),\:\alpha=0,1,2$
and
\begin{equation} \label{eq:3.3}
\epsilon+\partial_x V(x,y)>c>0,
\end{equation}
uniformly on   $(x,y)\in \R^ 2$. Then $H$ has no eigenvalues. Moreover, for $s>{1/2}$,  the following estimates holds uniformly on $\lambda$ in a compact interval
\begin{equation} \label{eq:3.4}
\| \la D_x \ra ^{-s} (H - \lambda \pm \ii 0)^{-1} \la D_x \ra
^{-s} \| = {\mathcal O}_{\epsilon}(1).
\end{equation}

\end{cor}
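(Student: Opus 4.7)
The plan is to apply Mourre theory with the conjugate operator $A = D_x$, exploiting that under (3.3) the Mourre estimate is globally \emph{strict} rather than holding only up to a compact remainder as in Theorem~3.

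First I would verify the regularity framework. The commutation relations $\ii[A,H] = \epsilon + \partial_x V$ and $[A,[A,H]] = -\partial_x^2 V$ extend to bounded self-adjoint operators under the hypothesis that $\partial_x^\alpha V \in L^\infty$ for $\alpha = 0,1,2$. Combined with the invariance of $D(H)$ under $e^{\ii\alpha A}$, which was already checked in the proof of Corollary~1, this shows that $H$ is of class $C^2(A)$ in the sense of \cite{AMG}. From (3.3) one then reads, directly at the operator level,
\begin{equation*}
\1b_{[a,b]}(H)\,\ii[A,H]\,\1b_{[a,b]}(H) \geq c\,\1b_{[a,b]}(H),
\end{equation*}
for every bounded interval $[a,b]\subset\R$, with no compact remainder and no restriction on $[a,b]$.

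Next I would derive the two conclusions from this strict global Mourre estimate. For the absence of eigenvalues: if $H\psi = \lambda\psi$ with $\psi \neq 0$, the virial theorem (valid under the $C^2(A)$ regularity established above) gives $\la \psi, \ii[A,H]\psi\ra = 0$; but (3.3) forces $\la\psi, \ii[A,H]\psi\ra \geq c\|\psi\|^2 > 0$, a contradiction. For the limiting absorption principle (3.4), I would invoke the standard consequence of a strict Mourre estimate (see \cite{M}, \cite{AMG}, \cite{G}):
\begin{equation*}
\sup_{\lambda \in I,\,\mu \in (0,1]}\bigl\| \la A \ra^{-s}(H-\lambda \mp \ii\mu)^{-1}\la A \ra^{-s}\bigr\| < \infty,\qquad s > 1/2,
\end{equation*}
for every compact interval $I \subset \R$. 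Since $A = D_x$, this is exactly (3.4).

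The main technical point, and essentially the only real work, is the $C^2(A)$ regularity; this reduces to the boundedness of $\partial_x V$ and $\partial_x^2 V$ together with the strong $C^0$ analysis of $e^{\ii\alpha A}$ already carried out in the proof of Corollary~1. Because (3.3) is a lower bound valid on all of $L^2(\R^2)$, the Mourre inequality has no excluded thresholds and no localization in $\lambda$ is needed, so one obtains a LAP uniform on arbitrary compact intervals of $\lambda$, exactly as claimed.
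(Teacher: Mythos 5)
Your proposal is correct and follows essentially the same route as the paper: the conjugate operator $A=D_x$ with the regularity already checked in the proof of Corollary~1, the strict global Mourre estimate coming directly from (3.3), and then absence of eigenvalues plus the limiting absorption principle from Mourre theory. Your only addition is spelling out the virial-theorem step and the $C^2(A)$ framing explicitly, where the paper simply cites \cite{M}, \cite{AMG}, \cite{G}.
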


\begin{proof}
Let $[a,b]$ be a compact interval in $\R$. From (\ref{eq:3.2}) and (\ref{eq:3.3}), we have 
\begin{equation} \label{eq:3.5}
\1b_{[a,b]}(H)[\partial_x,H]\1b_{a,b]}(H)\geq c \1b_{[a,b]}(H).
\end{equation}
According to the proof of Corollary 1, $A=D_x$ is a conjugate operator in the sense of \cite{M}. Combining this with 
(\ref{eq:3.5})
we deduce from \cite{M} that $H$ has no eigenvalue in $\R$. Applying once more Mourre theorem (see \cite{M}, \cite{AMG}, \cite{G}), we obtain the estimate (\ref{eq:3.4}).
\end{proof}
\section{Limiting absorption principle}
\renewcommand{\theequation}{\arabic{section}.\arabic{equation}}
\setcounter{equation}{0}

\def\la{\langle}
\def\ra{\rangle}
\def\lx{\la x \ra}

 In this section we treat the case when $\epsilon$ is small enough. Notice that when  $\epsilon$ tends to zero in general the assumption $\epsilon + \partial_x V > c > 0$ is not satisfied and we cannot apply Corollary 2. Our goal is to study the behavior of the resolvent $(H-\lambda \pm \ii \delta)^{-1}$ as $\delta \to 0$ for $\lambda \notin \sigma(Q)$. For such $\lambda$ we could have eigenvalues of $H$ and a direct application of Mourre argument is not possible. We will obtain the result assuming that $\epsilon $ is small and for this purpose we need the following

\begin{lem} Assume  that $V\in L^ \infty(\R^2;\R)$ and let $\lambda \notin \sigma(Q)$. Let $\chi \in C_0^{\infty}(\R; \R)$ be equal to $1$ near $\lambda$ and
let ${\rm supp}\: \chi \cap \sigma(Q) = \emptyset.$ Then
\begin{equation} \label{eq:4.1}
\Vert \chi(H) \lx^{-2}\Vert \leq C \epsilon^2.
\end{equation}
\end{lem}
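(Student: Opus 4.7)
My approach is to expand $\chi(H)$ to second order in $\epsilon$ around $\chi(Q)$ via an almost analytic extension, to use that $\chi(Q)=0$, to eliminate the resulting first order contribution by a Stokes argument, and to estimate the $\epsilon^{2}$ remainder with a weighted resolvent bound on $Q$.

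Fix an almost analytic extension $\tilde\chi\in C_{0}^{\infty}(\C)$ of $\chi$ with $\overline{\partial}\tilde\chi(z)=\mathcal{O}(|\Im z|^{N})$ for every $N$ and $\supp\tilde\chi\subset\Omega$, a small complex neighborhood of $\supp\chi$; since $\sigma(Q)\subset\R$ and $\supp\chi\cap\sigma(Q)=\emptyset$, I may assume $\Omega\cap\sigma(Q)=\emptyset$, so that $(z-Q)^{-1}$ is holomorphic and uniformly bounded on $\Omega$. Writing $H=Q+\epsilon x$ and iterating the resolvent identity $(z-H)^{-1}=(z-Q)^{-1}+\epsilon(z-H)^{-1}x(z-Q)^{-1}$ once, I get
\[
(z-H)^{-1}=(z-Q)^{-1}+\epsilon(z-Q)^{-1}x(z-Q)^{-1}+\epsilon^{2}(z-H)^{-1}x(z-Q)^{-1}x(z-Q)^{-1}.
\]
Inserting this into the Helffer--Sj\"ostrand formula $\chi(H)=-\pi^{-1}\int\overline{\partial}\tilde\chi(z)(z-H)^{-1}L(dz)$, multiplying on the right by $\la x\ra^{-2}$, and using $\chi(Q)=0$ (by the spectral theorem, since $\supp\chi\cap\sigma(Q)=\emptyset$) gives $\chi(H)\la x\ra^{-2}=\epsilon I_{1}+\epsilon^{2}I_{2}$ with
\[
I_{1}=-\frac{1}{\pi}\int\overline{\partial}\tilde\chi(z)\,(z-Q)^{-1}x(z-Q)^{-1}\la x\ra^{-2}\,L(dz),
\]
\[
I_{2}=-\frac{1}{\pi}\int\overline{\partial}\tilde\chi(z)\,(z-H)^{-1}x(z-Q)^{-1}x(z-Q)^{-1}\la x\ra^{-2}\,L(dz).
\]

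The point of this splitting is that $I_{1}=0$. The operator valued function $F(z):=(z-Q)^{-1}x(z-Q)^{-1}\la x\ra^{-2}$ is holomorphic on $\Omega$, and I will check (below) that it is \emph{bounded} as an operator on $L^{2}$ uniformly in $z\in\Omega$. Granted this, $\overline{\partial}(\tilde\chi F)=\overline{\partial}\tilde\chi\cdot F$, and testing against vectors $u,v\in L^{2}$ reduces $\int\overline{\partial}(\tilde\chi F)L(dz)=0$ to the scalar Stokes formula. For $I_{2}$, the same weighted resolvent estimates show that $\|x(z-Q)^{-1}x(z-Q)^{-1}\la x\ra^{-2}\|\leq C$ uniformly on $\Omega$; combined with $\|(z-H)^{-1}\|\leq|\Im z|^{-1}$ and $\overline{\partial}\tilde\chi(z)=\mathcal{O}(|\Im z|^{\infty})$, the integral defining $I_{2}$ converges absolutely with $\|I_{2}\|\leq C$ independent of $\epsilon$, yielding (\ref{eq:4.1}).

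The main technical step, and the principal obstacle, is the weighted resolvent bound $\|\la x\ra^{k}(z-Q)^{-1}\la x\ra^{-k}\|\leq C_{k}$ for $z\in\Omega$ and $k=1,2$, which is what allows one to commute $\la x\ra^{-2}$ through the unbounded factors $x$ in the integrands. The magnetic term $-2By$ in $Q_{0}$ obstructs the naive commutator computation: already $[\la x\ra^{2},Q_{0}]$ contains the unbounded factor $x(D_{x}-By)$. The cure is the observation that $(D_{x}-By)(z-Q)^{-1}$ \emph{is} a bounded operator on $\Omega$, because $(D_{x}-By)^{2}\leq Q_{0}=Q-V$ and $V$ is bounded. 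Combined with the identity $[Q,f(x)]=-2\ii f'(x)(D_{x}-By)-f''(x)$, this yields boundedness of $[Q,\la x\ra](z-Q)^{-1}$ (since $\la x\ra'$, $\la x\ra''$ are bounded), hence of $P_{1}:=\la x\ra(z-Q)^{-1}\la x\ra^{-1}$. The $k=2$ case then follows from $P_{2}=P_{1}+[\la x\ra,P_{1}]\la x\ra^{-1}$ together with a similar analysis of the double commutator $[\la x\ra,[\la x\ra,(z-Q)^{-1}]]$, after which all the compositions required for $I_{1}=0$ and $\|I_{2}\|\leq C$ become routine.
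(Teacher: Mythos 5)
Your proposal is correct and follows essentially the same route as the paper: the Helffer--Sj\"ostrand formula, the second-order iteration of the resolvent identity in $\epsilon$, the vanishing of the $O(1)$ and $O(\epsilon)$ terms because $(z-Q)^{-1}$ is analytic over $\supp\tilde\chi$, and a uniform bound on the weighted $\epsilon^{2}$ integrand obtained from commutator identities together with the boundedness of $(D_x-By)(z-Q)^{-1}$, combined with $\|(z-H)^{-1}\|\leq|\Im z|^{-1}$ and $\overline\partial\tilde\chi=\mathcal{O}(|\Im z|^{\infty})$. The only differences are cosmetic: you package the weighted estimates as conjugation bounds $\|\la x\ra^{k}(z-Q)^{-1}\la x\ra^{-k}\|\leq C_k$ and spell out the Stokes argument for the first-order term, whereas the paper decomposes $x(z-Q)^{-1}=(z-Q)^{-1}x+L_1$ and bounds the pieces $I_1,\dots,I_4$ directly.
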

\begin{proof} Since ${\rm \supp}\: \chi \cap \sigma(Q) = \emptyset,$ the operators $(z - Q)^{-1}$ and
$(z - Q)^{-1}  x (z - Q)^{-1}$ are analytic operator valued functions for $z$ in a complex
neighborhood of supp $\chi$. Let $\tilde{\chi}(z) \in
C_0^{\infty}(\C)$ be an almost analytic continuation of $\chi(x)$
such that
$$\bar{\partial} \tilde{\chi}(z) = {\mathcal O}(|\Im z|^{\infty})$$
and ${\rm supp}\: \tilde{\chi}(z) \cap \sigma(Q) = \emptyset.$ We
have the representation
$$\chi(H) = -\frac{1}{\pi} \int \bar{\partial} \tilde{\chi}(z) (z - H)^{-1} L(dz),$$
where $L(dz)$ is the Lebesgue measure in $\C$. By using the
resolvent identity, we get
$$(z-H)^{-1}=(z-Q)^{-1}+\epsilon (z - Q)^{-1}  x (z - Q)^{-1}+\epsilon^2
(z - H)^{-1} x (z - Q)^{-1}  x (z - Q)^{-1},$$ and we obtain
$$\chi(H) = \chi(Q)-\frac{\epsilon}{\pi} \int \bar{\partial} \tilde{\chi}(z) (z - Q)^{-1}  x (z - Q)^{-1}L(dz)$$
$$
-\frac{\epsilon^2}{\pi} \int \bar{\partial} \tilde{\chi}(z) (z -
H)^{-1} x (z - Q)^{-1}  x (z - Q)^{-1}L(dz).$$ Since  ${\rm
supp}\: \tilde{\chi}(z) \cap \sigma(Q) = \emptyset,$ the first two
terms on the right hand side vanish. Consequently,
\begin{equation} \label{eq:4.2}
 \chi(H)=
-\frac{\epsilon^2}{\pi} \int \bar{\partial} \tilde{\chi}(z) (z -
H)^{-1} x (z - Q)^{-1}  x (z - Q)^{-1}L(dz).
\end{equation}

Next we observe that
$$x(z - Q)^{-1} = (z - Q)^{-1}x  + (z - Q)^{-1}[x, Q](z - Q)^{-1} = (z - Q)^{-1}x + L_1.$$
We have $[x, Q] = 2(D_x - By)$. Thus it is easy to see that for $ z
\notin \sigma(Q)$, $L_1 = (z - Q)^{-1}[x, Q](z - Q)^{-1}$ is a
bounded operator since $(D_x - By)(\ii - Q)^{-1}$ is bounded and
$(z - Q)^{-1} = (\ii - Q)^{-1} + (\ii - Q)^{-1}(\ii - z)(z -
Q)^{-1}$. We write
$$x(z - Q)^{-1} x (z - Q)^{-1} = (z - Q)^{-1} x (z - Q)^{-1} x $$
$$+ (z - Q)^{-1} x L_1 + L_1(z - Q)^{-1}x + L_1^2 = \sum_{j = 1}^4 I_j.$$
The operators $I_4 = L_1^2$ and $I_3 = L_1 (z -Q)^{-1} x \la x \ra
^{-2}$ are bounded. To see that $I_1 \la x \ra^{-2}$ is bounded, note that
$$I_1 \la x \ra ^{-2} = (z - Q)^{-2} x^2 \la x \ra^{-2} + (z - Q)^{-1}L_1 x \la x \ra ^{-2}.$$
Finally,
$$I_2\la x \ra ^{-2} = (z - Q)^{-2}x [x, Q](z - Q)^{-1} \la x \ra ^{-2} + (z - Q)^{-1} L_1 [x, Q](z - Q)^{-1}\la x \ra ^{-2}$$
and since the second term on the right hand side is bounded, it
remains to examine the operator
 $$x [ x, Q] (z - Q)^{-1} \la x \ra ^{-2} = [x, Q] x (z - Q)^{-1} \la x \ra ^{-2} + 2\ii (z - Q)^{-1} \la x \ra ^{-2}.$$
 Applying the above argument, we see that the last operator is bounded. Consequently, the operator under integration in (\ref{eq:4.2}) is bounded by ${\mathcal O}(|\Im z|^{-1})$ and this proves the statement.
\end{proof}
\begin{prop} Assume  that $\partial_x^\alpha V \in C^0(\R^2; \R) \cap L^ \infty(\R^2;\R)$ for $\alpha = 0, 1, 2$ and let $\la x\ra^2\partial_xV\in L^\infty(\R^2)$. Let $[a, b]$ be a compact interval such that $[a, b] \cap \sigma(Q) = \emptyset.$ Then for $s > 1/2$ and
sufficiently small $\epsilon_0 > 0$ we have the following estimate
uniformly with respect to $\lambda \in [a, b]$ and $\epsilon \in ]0, \epsilon_0]$
\begin{equation} \label{eq:4.3}
\| \la D_x \ra ^{-s} (H - \lambda \pm \ii 0)^{-1} \la D_x \ra
^{-s} \| \leq C\epsilon^{-1}.
\end{equation}
Moreover, $H$ has no embedded eigenvalues and singular continuous spectrum in $[a, b].$
\end{prop}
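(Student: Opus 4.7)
The strategy is to upgrade the Mourre estimate of Theorem 3 into a \emph{strict} one by exploiting the smallness provided by Lemma 7, and then invoke the classical Mourre theory to extract the limiting absorption principle, the absence of eigenvalues, and the absence of singular continuous spectrum simultaneously.

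Set $A=D_x$. In Corollary 1 it was already checked that $e^{\ii\alpha A}$ preserves $D(H)$ and that $\ii[A,H]=\epsilon+\partial_x V$ and $[A,[A,H]]=-\ii\partial_x^2 V$ are bounded (hence $H$-bounded), so $A$ is admissible as a Mourre conjugate operator under the standing hypothesis on $V$. The remaining task is to produce a strict positive commutator estimate on $[a,b]$ with constant of order $\epsilon$. Since $[a,b]\cap\sigma(Q)=\emptyset$ and $\sigma(Q)$ is closed, pick $\chi\in C_0^\infty(\R)$ with $\chi\equiv 1$ on an open neighborhood of $[a,b]$ and $\mathrm{supp}\,\chi\cap\sigma(Q)=\emptyset$. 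Lemma 7 gives
$$\|\chi(H)\lx^{-2}\|\leq C\epsilon^2,$$
and since $\lx^2\partial_xV\in L^\infty$ by assumption, factoring $\partial_xV=\lx^{-2}\cdot(\lx^2\partial_xV)$ and applying the bound on both sides yields
$$\|\chi(H)\,\partial_xV\,\chi(H)\|\leq C'\epsilon^2.$$
Combining with $\ii[A,H]=\epsilon+\partial_xV$ gives, for $\epsilon\leq\epsilon_0$ small enough,
$$\chi(H)\,\ii[A,H]\,\chi(H)\;\geq\;\bigl(\epsilon-C'\epsilon^2\bigr)\chi(H)^2\;\geq\;\tfrac{\epsilon}{2}\chi(H)^2.$$
Multiplying by $\1b_{[a,b]}(H)$ on both sides (using $\chi\equiv 1$ near $[a,b]$) produces the \emph{strict} Mourre inequality
$$\1b_{[a,b]}(H)\,\ii[A,H]\,\1b_{[a,b]}(H)\;\geq\;\tfrac{\epsilon}{2}\,\1b_{[a,b]}(H),$$
with no compact remainder.

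Once this strict estimate is in hand, the standard Mourre theorem (see \cite{M}, \cite{AMG}, \cite{G}) delivers all three conclusions at once: $H$ has no point spectrum and no singular continuous spectrum in $[a,b]$, and the boundary values of the resolvent exist in the weighted norm with the quantitative bound
$$\|\la A\ra^{-s}(H-\lambda\pm\ii 0)^{-1}\la A\ra^{-s}\|\;\leq\;\frac{C}{\text{Mourre constant}}\;\leq\;C'\epsilon^{-1},\quad s>\tfrac{1}{2},$$
uniformly in $\lambda\in[a,b]$ and $\epsilon\in\,]0,\epsilon_0]$. Since $A=D_x$, this is exactly (\ref{eq:4.3}).

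The main obstacle is obtaining the strict positive commutator bound uniformly in $\epsilon$: the naive estimate only gives $\ii[A,H]\geq\epsilon-\|\partial_xV\|_\infty$, which may be negative when $\epsilon\to 0$. The reason the argument succeeds is that $\chi(H)$ effectively localizes in $x$ (this is the content of Lemma 7, whose $\epsilon^2$ gain compensates the loss of one $\epsilon$ in the Mourre constant), so $\partial_xV$ is seen through a weight and becomes of size $\epsilon^2$. This is precisely the point where the smallness hypothesis $\lx^2\partial_xV\in L^\infty$ and the separation of $\lambda$ from $\sigma(Q)$ are both used in an essential way; weakening either would break the step.
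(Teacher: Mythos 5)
Your argument is correct and is essentially the paper's own proof: localize with $\chi$ supported off $\sigma(Q)$, use Lemma 7 together with $\la x\ra^2\partial_xV\in L^\infty$ to bound the sandwiched $\partial_xV$ term by ${\mathcal O}(\epsilon^2)$, obtain the strict Mourre estimate $\1b(H)\,[\partial_x,H]\,\1b(H)\geq c_0\epsilon\,\1b(H)$ with $A=D_x$ (whose admissibility was checked in Corollary 1), and then quote Mourre theory for (\ref{eq:4.3}) and the absence of eigenvalues and singular continuous spectrum. The only cosmetic imprecisions are that the norm bound gives $\chi(H)\partial_xV\chi(H)\geq -C'\epsilon^2 I$ rather than $-C'\epsilon^2\chi(H)^2$ (harmless, since sandwiching with the spectral projection yields the same strict inequality, and the paper in fact projects onto a slightly enlarged interval $[a_1,b_1]$), and the double commutator is $[A,[A,H]]=-\partial_x^2V$ without the extra factor of $\ii$.
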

\def\1b{{\mathbb I}}
\begin{proof} Let $[a - \delta, b + \delta] \cap \sigma (Q) = \emptyset$ for $0 < \delta \ll 1.$ Choose a function
$\chi(t) \in C_0^{\infty}(\R ; \R)$ such that supp $\chi \subset
[a-\delta, b +\delta]$ and $\chi(t) = 1$ for $a_1 = a -\delta/2
\leq t \leq b + \delta/2 = b_1.$ Then
$$\1b_{[a_1, b_1]}(H) [\partial_x, H] \1b_{[a_1, b_1]}(H) = \epsilon \1b_{[a_1, b_1]}(H) + \1b_{[a_1, b_1]}(H) \partial_x V \1b_{[a_1, b_1]}(H)$$
$$ =  \epsilon \1b_{[a_1, b_1]}(H) + \1b_{[a_1,b_1]}(H)\Bigl(\chi(H) \la x \ra ^{-2}\Bigr) \Bigl(\la x \ra ^2 \partial_x V \Bigr)\1b_{[a_1, b_1]}(H)$$
Our assumption implies that the multiplication operator $\la x \ra
^2 \partial_xV\in L^\infty$ , while Lemma 7 says that
$$\|\chi(H) \la x \ra ^{-2}\| \leq C \epsilon^2.$$
Thus
$$\1b_{[a_1,b_1]}(H)\Bigl(\chi(H) \la x \ra ^{-2}\Bigr) \Bigl(\la x \ra ^2 \partial_x V \Bigr)\1b_{[a_1, b_1]}(H) \leq C_1 \epsilon^2 \1b_{[a_1,b_1]}(H)$$
and with a constant $c_0 > 0$ we deduce
$$\1b_{[a_1, b_1]}(H) [\partial_x, H] \1b_{[a_1, b_1]}(H) \geq c_0 \epsilon \1b_{[a_1,b_1]}(H).$$
Then it is well known (see for instance \cite{M}, \cite{AMG}, \cite{G}) that
for $\lambda \in [a, b]$ we get (\ref{eq:4.3}) and $H$ has no eigenvalues and singular continuous spectrum in $[a, b].$
\end{proof}
\begin{rem} As we mentioned in Remark $2$ for sign-definite rapidly decreasing potentials the spectrum of the operator $Q$ is formed by infinite  number eigenvalues having as points of accumulation the Landau levels $\mu_n = (2n + 1)B,\: n \in \N.$  For such potentials Proposition $3$ shows that the embedded eigenvalues of $H$ could appear only
in small neighborhoods of the eigenvalues of $Q$. Since in every interval we may have only a finite number of eigenvalues  of $H$, it is clear that for some  eigenvalues $\nu$ of $Q$ there are no eigenvalues
of $H$ in their neighborhoods. Moreover, it was proved in \cite{KP} that for potentials $V \in C_0^{\infty}(\R^2)$ we have
$\sigma(Q) \cap ]\mu_n - B , \mu_n + B[ \subset (\mu_n - Cn^{-1/2}, \mu_n + Cn^{-1/2}),\: n \geq N$ with $C > 0$ and $N$ depending only on $\sup|V|$ and the diameter of the support of $V$. Thus for $M$ large the embedded eigenvalues $\lambda \geq M $ of $H$ are sufficiently close to Landau levels $\Lambda_n.$

\end{rem}
\section{Estimates for the derivative of the spectral shift function}
\def\ds{\la D_x \ra^s}
\def\dss{\la D_x \ra^{2s}}
\def\pv{\partial_x V}
\def\qz{(Q - z)^{-1}}
\def\xx{\la x \ra}
First we notice that the assumption (\ref{eq:1.4}) makes possible to define the spectral shift function $\xi(\lambda,
\epsilon)$ related to operators $H_0(\epsilon) = H_0(B, \epsilon)$
and $H(\epsilon) = H_0(B, \epsilon) + V(x, y)$ by the equality
$$ \la \xi', f \ra = \tr \Bigl( f(H(\epsilon)) - f(H_0(\epsilon))\Bigr),\: f \in C_0^{\infty}(\R).$$
Here and below we omit the dependence of $B$ in the notations. Our
purpose in this section is to establish Theorem 2. For the proof we need the following
\begin{prop}
Under the assumptions of Theorem $2$, for $\lambda_0 \notin \sigma(Q)$ and $1/2 < s < \min{(1/2+\delta/4,1)}$ the
operator
$$\ds \pv \Bigl[\qz x\Bigr]^n \ds$$
is trace class for $z$ in a small complex neighborhood $\Xi
\subset \C$ of $\lambda_0.$
\end{prop}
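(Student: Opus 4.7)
Since $\lambda_0 \notin \sigma(Q)$, the resolvent $\qz$ is uniformly bounded for $z$ in some small complex neighborhood $\Xi$ of $\lambda_0$; the identity $(D_x - By)^2 \leq Q + \|V\|_{\infty}$ also gives that $(D_x - By)\qz$ is uniformly bounded on $\Xi$. Since $[Q, x]$ is proportional to $(D_x - By)$, the commutator $[\qz, x] = \qz\, [Q, x]\, \qz$ is a bounded operator of the form $C\, \qz (D_x - By)\qz$, and the analogous identity for $[(D_x - By)\qz, x]$ remains in the algebra generated by $\qz$ and $(D_x - By)\qz$. Iterating these commutators to push all factors of $x$ past all factors of $\qz$ yields, by induction on $n$,
$$[\qz\, x]^n = \sum_{k=0}^{n} x^k\, B_{k,n}(z),$$
where each $B_{k,n}(z)$ is a finite linear combination of products of $\qz$'s and $(D_x - By)\qz$'s (no free $x$'s), uniformly bounded on $\Xi$, and where a bookkeeping argument shows that every term of $B_{k,n}$ contains at least $n \geq 2$ factors of $\qz$.

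Setting $\psi_k(x,y) := \pv(x,y)\, x^k$, the assumption (\ref{eq:1.4}) and the Leibniz rule give $|\partial^\alpha \psi_k(x,y)| \leq C_\alpha\, \xx^{-1-\delta-\alpha_1}\,\la y\ra^{-2-\delta-\alpha_2}$ for $0 \leq k \leq n$. The proof is thereby reduced to showing that each $T_k(z) := \ds\, \psi_k\, B_{k,n}(z)\, \ds$ is trace class uniformly for $z \in \Xi$. I would split $B_{k,n}(z) = B'(z)\, B''(z)$ so that both $B'$ and $B''$ contain at least one factor of $\qz$ (possible since $B_{k,n}$ has $\geq n \geq 2$ of them), and write $\psi_k = \psi^{(1)}\,\psi^{(2)}$ with $\psi^{(j)} \in C^\infty(\R^2)$, $|\psi^{(j)}| \leq C\, \xx^{-(1+\delta)/2}\,\la y\ra^{-(2+\delta)/2}$, together with analogous derivative bounds. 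Using $\psi^{(2)} B' = B'\, \psi^{(2)} + [\psi^{(2)}, B']$, with the commutator bounded and of the same algebraic type (the derivatives of $\psi^{(2)}$ retain the same decay), one decomposes
$$T_k(z) = \bigl(\ds\, \psi^{(1)}\, B'\bigr)\bigl(\psi^{(2)}\, B''\, \ds\bigr) + \ds\, \psi^{(1)}\, [\psi^{(2)}, B']\, B''\, \ds,$$
and checks that both factors of the first term are Hilbert-Schmidt via the Weyl symbol calculus. The leading Weyl symbol of $\ds\, \psi^{(1)}\, B'$ is $\la \xi\ra^s\, \psi^{(1)}(x,y)$ times the symbol of $B'$, which contains the factor $((\xi - By)^2 + \eta^2 - z)^{-1}$ coming from $\qz$; after the substitution $\xi' = \xi - By$ one has
$$\int \la \xi' + By\ra^{2s}\,((\xi')^2 + \eta^2 + 1)^{-2}\, d\xi'\, d\eta \leq C(1 + |y|^{2s}),$$
and hence
$$\int |\psi^{(1)}|^2\,(1 + |y|^{2s})\, dx\, dy \leq C\int \xx^{-(1+\delta)}\,\la y\ra^{-(2+\delta)+2s}\, dx\, dy < \infty,$$
where the convergence of the $y$-integral uses $s < (1+\delta)/2$, which follows from $s < 1/2 + \delta/4$. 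The factor $\psi^{(2)}\, B''\, \ds$ and the commutator term are estimated identically.

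The main obstacle is the rigorous implementation of the Weyl symbolic calculus for $\qz$, which is not literally a pseudodifferential operator. One resolves this by approximating $\qz$ via the parametrix obtained from the resolvent identity $\qz = (Q_0 - z)^{-1} - \qz\, V\, (Q_0 - z)^{-1}$ iterated as needed, using the smoothness and decay of $V$ from (\ref{eq:1.4}) and absorbing the remainders as lower-order trace class corrections. The algebraic bookkeeping of $\qz$-counts through the iterated commutators in Step~1 is tedious but routine.
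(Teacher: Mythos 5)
Your algebraic preparation (commuting the factors of $x$ to the left so that they are absorbed by the decay of $\pv$, the observation that $\qz$, $(D_x-By)\qz$ and the commutators $[\qz,x]$, $[(D_x-By)\qz,x]$ stay bounded uniformly on $\Xi$, and the exponent bookkeeping $\la \xi\ra^{2s}\leq C\la\xi-By\ra^{2s}\la y\ra^{2s}$ with $s<1/2+\delta/4$) is in the same spirit as the paper's proof. But the analytic core of your argument has a genuine gap: you estimate the Hilbert--Schmidt norms of $\ds\,\psi^{(1)}B'$ and $\psi^{(2)}B''\ds$ by a Weyl-symbol computation in which ``the symbol of $B'$ contains the factor $((\xi-By)^2+\eta^2-z)^{-1}$ coming from $\qz$.'' For the relevant $z$, namely $z$ in a complex neighborhood of a real $\lambda_0\notin\sigma(Q)$ (so in particular real $z$ above the bottom of the spectrum), the operator $\qz$ is \emph{not} a pseudodifferential operator and has no such symbol: the function $(\xi-By)^2+\eta^2+V(x,y)-z$ vanishes on a non-compact characteristic set, so there is no elliptic parametrix, and the quantization of its reciprocal does not exist in any usable symbol class. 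Your proposed repair via the resolvent identity $\qz=(Q_0-z)^{-1}-\qz V(Q_0-z)^{-1}$ inherits exactly the same defect: $(Q_0-z)^{-1}$ is bounded for such $z$ only because the spectrum of $Q_0$ is the discrete set of Landau levels, a purely spectral fact with no symbolic counterpart ($Q_0-z$ is not elliptic there either), so there is no parametrix from which to read off symbols or to ``absorb remainders as lower-order trace class corrections.'' This is precisely the difficulty the paper flags at the start of its proof (for $z\ll 0$ the statement would be easy; for $z\in\R^+\setminus\sigma(Q)$ it is not), so as written the Hilbert--Schmidt estimates at the heart of your argument are unjustified.

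The way around it, which is what the paper does, is to never use $\qz$ symbolically at all: one writes $(Q-z)^{-1}=(Q-\ii)^{-1}\bigl[1+(z-\ii)(Q-z)^{-1}\bigr]$ and organizes the commutations so that the entire weight $\la D_x\ra^{2s}\,\pv\,x^n$ ends up attached to two resolvents at the \emph{non-real} spectral parameter, i.e.\ to the honest pseudodifferential operator $\la D_x\ra^{2s}\pv\,\la x\ra^{n}(Q-\ii)^{-2}$, whose symbol is shown to be integrable (this is where $2s<1+\delta/2$ is used, via Theorem 9.4 of \cite{DS}); every $z$-dependent resolvent, as well as $(D_x-By)\qz$, $[Q,x]\qz$, $x^j(Q-\ii)^{-j}$, then appears only as a bounded factor, and the commutations with $\ds$ use that $[x,\ds]$ and $[V,\ds]$ are bounded for $s<1$. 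Your scheme of pulling all $x$'s to the far left and then splitting the decay $\psi_k=\psi^{(1)}\psi^{(2)}$ could in principle be repaired along these lines (each HS factor would have to be rearranged so that a decaying weight, a $\la D_x\ra^{s}$ weight and a factor $(Q-\ii)^{-1}$ sit adjacent to one another, with all remaining pieces bounded), but that rearrangement is exactly the nontrivial bookkeeping that the paper carries out and that your ``estimated identically'' and ``tedious but routine'' remarks leave unaddressed; without it, and without replacing the symbolic treatment of $\qz$, the proof does not go through.
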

\begin{proof} Before starting the proof, notice that it is easy to establish the statement for $z \ll 0$ since in this case the operator $\qz$ is a pseudodiferential one and we can apply the calculus of pseudodifferential operators and the criteria which guarantees that a pseudodifferential operator is trace class (see for instance, \cite{DS}, Theorem 9.4). For $z \in \R^+ \setminus \sigma(Q)$ this is not the case and $\qz$ is a bounded operator but not a pseudodifferential one. We may replace $\qz$ by the pseudodifferential operator $(Q - \ii)^{-1}$ modulo bounded operators but therefore  it is difficult to examine the product involving many bounded operators and factors $x^k$. To overcome this difficulty, we are going to apply a convenient decomposition by product of operators having in mind that the operator on the left of a such product must be trace class one. \\

 First we treat the case $n = 2$, the general case will be covered by a recurrence. We start with the analysis of the operator
\begin{equation} \label {eq:5.1}
\dss \pv [\qz x]^2.
\end{equation}
 Our goal is to show that (\ref{eq:5.1}) is a trace class operator. Write
$$\dss \pv \xx^2 \xx^{-2} \qz x \qz x = \dss (\pv )\xx^2 \qz \xx^{-2} x \qz x$$
$$+ \dss \pv \xx^2 \qz [Q, \xx^{-2}] \qz x \qz x$$
$$= \dss \pv \xx^2 (Q -z)^{-2} \Bigl[\xx^{-2} x^2 +  [Q, \xx^{-2} x] \qz x \Bigr]$$
$$+ \dss \pv \xx^2 \qz [Q, \xx^{-2}] \qz x \qz x= T_1 + T_2.$$
To deal with $T_1$, we use the representation
$$T_1 =  \dss \pv \xx^2 (Q -z)^{-2} W_1$$
 and we will show that the operator
$$W_1 = \xx^{-2} x^2 + [Q, \xx^{-2} x] \qz x $$
$$= \xx^{-2}x^2 - \ii \Bigr[(D_x - By) \frac{1 -x^2}{(1 + x^2)^2} + \frac{1 - x^2}{(1 + x^2)^2} (D_x - By)\Bigr](Q - z)^{-1} x$$
is bounded. Consider the operator
$$(D_x - By) \frac{(1- x^2)}{(1 + x^2)^2} \qz x = (D_x - By) \frac{(1- x^2)x}{(1 + x^2)^2}  (Q - \ii)^{-1}\Bigl[1 + (z - \ii)\qz\Bigr]$$
$$+(D_x - By) \frac{1 - x^2}{(1 + x^2)^2}\qz [Q, x] \qz.$$
The pseudodifferential operator
$$(D_x - By) \frac{(1- x^2)x}{(1 + x^2)^2} (Q - \ii)^{-1}$$
is bounded and the product of this operator with $\Bigl[1 +
(\ii-z)\qz\Bigr]$ is bounded, too. As in the proof of Lemma 7, we
see that $[Q, x] \qz$ is bounded and with the same argument we
treat the other terms. Thus we conclude that $W_1$ is a bounded
operator. Next we write
$$T_2 = \dss \pv \xx^2 (Q - z)^{-2} W_2,$$
where
 $$W_2 = [ Q, \xx^{-2}] x \qz x + \Bigl[Q, [Q, \xx^{-2}]\Bigr] \qz x \qz x = W_{21} + W_{22}.$$
We have
$$W_{21}=2 \ii \Bigl[ (D_x - By) \frac{x^2}{(1 + x^2)^2} \qz x + \frac{x}{(1 + x^2)^2} (D_x - By) x \qz x\Bigr]$$
and as above we deduce that $W_{21}$ is a bounded operator.
For the analysis of $W_{22}$, we write
$$W_{22} = \Bigl\{ \frac{1 - 3x^2}{(1+ x^2)^3} 4(D_x - By)^2 + R_1(x)(D_x -By) + R_2(x) + \frac{x}{(1 + x^2)^2}(4\partial_x V + 8B D_y)\Bigr\}$$
$$\times  \qz x \qz x.$$
A simple calculus
gives
$$\qz x \qz x = \qz x^2 \qz  + \qz x  M_1$$
$$= x^2 (Q-z)^{-2} + 4\qz x(D_x - By) (Q-z)^{-2} + x \qz M_1 + \qz M_2$$
$$= x^2 (Q-z)^{-2} + 4x \qz M_3 + \qz M_4$$
$$= x^2 (Q-\ii)^{-2}M_5 + 4x (Q-\ii)^{-1} M_6 + (Q-\ii)^{-1} M_7,$$
where  $M_k\:, k =1,2....$, denote bounded operators. The
pseudodifferential calculus implies  that the product of the term in the brackets $\{...\}$ with $x^j(Q- \ii)^{-j},\: j = 1,2$ is a bounded operator. Combining
this with the above equality, we conclude that $W_{22}$ is bounded.\\

Now it remains to see that the operator
$${\mathcal T} = \dss \pv \xx^2 (Q - z)^{-2}$$
is trace class. For this purpose we replace $(Q - z)^2$ by
$$(Q - \ii)^{-2}\Bigl[ I + (z- \ii) \qz\Bigr]^2$$
and consider the pseudodifferential operator
\begin{equation} \label{eq:5.2}
\dss \pv \xx^2 (Q - \ii)^{-2}
\end{equation}
with principal symbol
$$g_s(x, y, \xi, \eta) = \frac{\xi^{2s} (\pv)(x,y) (1 + x^2)}{\Bigl((\xi - By)^2 + \eta^2 + V(x,y) - \ii\Bigl)^2}.$$
We use the estimate $\la \xi \ra^{2s} \leq C\la \xi - By \ra^{2s} \la y \ra^{2s}$
and we apply Theorem 9.4 in \cite{DS} to deduce that (\ref{eq:5.2}) is a trace class operator. In fact we have 
$$\sum_{|\alpha| \leq 5} \|\partial_{x,y,\xi, \eta}^{\alpha} g_s\|_{L^1(\R^4)} < \infty$$
since $2s < 2$ guarantees that the integral with respect to $\xi$ is convergent, while
$2s < 1 + \delta/2$ and the estimate (\ref{eq:1.4}) imply that integral  with respect to $y$ is convergent.
Consequently, ${\mathcal T}$ is a trace class operator and this completes the
analysis of (\ref{eq:5.1}). Notice  also that the same argument
implies that the operator
$$\ds \pv \Bigl[\qz x\Bigr]^2$$
is trace class.\\
To prove that the operator $\ds \pv \Bigl[\qz x\Bigr]^2 \ds$ is
trace class, we commute the operator $\ds$ with $\qz x$ and $\pv$ in
order to reduce the proof to that of (\ref{eq:5.1}). The
commutators $[x, \ds]$ and $[V,\ds]x$  are bounded since $s < 1$. Next
$$[\qz , \ds]x = \qz [V,\ds]\qz x$$
$$ =  \qz [V,\ds]  \ra  \Bigl(x \qz + \qz M_1\Bigr) = \qz M_2$$
and we obtain operators which can be handled by the above argument. Thus the assertion is proved for $n = 2$.\\
\def\st{\sim_t}
\def\qzz{(Q - z)^{-2}}

Passing to the general case $n > 2$, assume that  the assertion
holds for $n = 2,...,k-1,$
 and suppose that $V$ satisfy the estimate (\ref{eq:1.4}) with $n = k$. The idea is to replace the
operator
$$\ds \pv [\qz x]^{k}\ds$$
by the trace class operator $\ds (\pv) x^k (Q-z)^{-2} \ds$ plus a
sum of several operators which are trace class according to the
recurrence assumption. Notice that if $M_j$ is bounded operator
obtained as a product of $(D_x - By)$ and $(Q-z)^{-j},\: j \geq
1$, the operator $\la D_x \ra^{-s} M_j \ds$ becomes a bounded
operators and this makes possible to exploit the representation
$$\ds \pv \qz x....M_j \ds = \Bigl[\ds \pv \qz x....\ds \Bigr] \Bigl(\la D_x \ra^{-s} M_j \ds\Bigr)$$
Thus we reduce the analysis to the trace class property of $\ds \pv \qz x....\ds$. For simplicity of the notations we will write $A \st B$ if the difference $A- B$ is a trace class operator.\\

We start with the observation that
$$\ds \pv [\qz x]^{k} \ds \st \ds \pv [\qz x]^{k-2} \qz x^2 \qz \ds.$$
We can establish this by a recurrence. For $k-1$ we apply the
equality
$$\ds \pv [\qz x]^{k-1} \ds =\ds \pv [\qz x]^{k-3}\qz x^2 \qz \ds$$
$$\ds \pv [\qz x]^{k-2} \qz [Q, x]\qz\ds $$
$$\st \ds \pv [\qz x]^{k-3}\qz x^2 \qz \ds.$$
Commuting $\qz$ and $x^2$, we obtain the result for $k-1$ and in the same way we continue for $p \leq k-1.$\\

Next we commute $\qz$ and $x^2$ and get
$$\ds \pv [\qz x]^{k-2} \qz x^2 \qz \ds$$
$$ \st \ds \pv [\qz x]^{k-3} \qz x^3 \qzz \ds.$$
Indeed,  $[Q, x^2] = 4(D_x - By) x = -4\ii x(D_x - By) -2$ yields
$$\qz x^2 \qz = x^2 \qzz -4\ii\qz x(D_x - By) \qz -2\qzz$$
and for the term
$$\ds \pv [\qz x]^{k-1} (D_x-By) (Q-z)^{-1} \ds$$
we use the recurrence assumption and the fact that $M_2 = (D_x -
By) \qz$ is a bounded operator. In the same way for $1 \leq j \leq
k-1$ we show that
$$\ds \pv [\qz x]^{k-j} \qz x^j \qzz \ds$$
$$\st \ds \pv [\qz x]^{k-j-1}\qz x^{j+1} \qzz \ds,$$
taking into account the equality
$$[Q, x^j] = 2j(D_x - By)x^{j-1} = 2jx^{j-1}(D_x - By) - 2\ii j(j-1)x^{j-1}$$
and the recurrence assumption. Finally, we prove that
$$\ds \pv [\qz x]^k \ds \st \ds (\pv) x^k \qzz \ds$$
and, as in the proof in the case $n = 2$, we conclude that the
operator on the right hand side is trace class one.

\end{proof}

After this preparation we pass to the proof of Theorem 2.\\

{\bf Proof of Theorem 2}. Let $\Xi \subset \R$ be a small
neighborhood of $\lambda_0$ such that $\Xi \cap \sigma(Q) =
\emptyset.$ For the simplicity of the notations we will write
$H(\epsilon),\: \xi(\lambda, \epsilon)$ instead of $H(B,
\epsilon), \: \xi(\lambda; B, \epsilon).$ Given $f \in
C_0^{\infty}(\Xi),$ introduce an almost analytic continuation
$\tilde{f} \in C_0^{\infty}(\C)$ of $f$ so that
$\bar{\partial}\tilde {f}(z) = {\mathcal O}(|\Im z|^{\infty})$ and
${\rm supp}\: \tilde{f}(z) \cap \sigma(Q) = \emptyset$. Since $(z
- Q)^{-1}$ is analytic over the support of $\tilde{f}(z)$,
applying the resolvent equality, we get
\begin{eqnarray} \label{eq:5.7}
\partial_x V f(H(\epsilon)) = -\frac{1}{\pi}\int \bar{\partial}\tilde {f}(z) \partial_x V (z -H(\epsilon))^{-1} L(dz)\\ \nonumber
= (- 1)^{n+1}\frac{\epsilon^n}{\pi} \int \bar{\partial}\tilde
{f}(z) \partial_x V [(z - Q)^{-1} x]^n (z - H(\epsilon))^{-1}
L(dz).
\end{eqnarray}
Taking into account Proposition 4 and the cyclicity of the trace,
we get
$$\tr\int \bar{\partial}\tilde {f}(z) \la D_x \ra ^{-s} \Bigl[\la D_x \ra ^s  \partial_x V [(z - Q)^{-1} x]^n \la D_x \ra^s \Bigr]\la D_x \ra^{-s}(z - H(\epsilon))^{-1} L(dz)$$
$$ = \tr \int \bar{\partial}\tilde {f}(z)\Bigl[\la D_x \ra ^s  \partial_x V [(z - Q)^{-1} x]^n \la D_x \ra ^s \Bigr]\la D_x \ra ^{-s} (z - H(\epsilon))^{-1} \la D_x \ra ^{-s}L(dz).$$

Set $W(z) = \la D_x \ra ^s  \partial_x V [(z - Q)^{-1} x]^n \la
D_x \ra ^s $ and note that for $z \in {\rm supp}\: \tilde{f}$ this
operator is trace class and $W(z)$ is analytic. We write
$$ -\frac{1}{\pi}\int \bar{\partial}\tilde {f}(z) \tr\Bigl(\partial_x V [(z - Q)^{-1} x]^n (z - H(\epsilon))^{-1}\Bigr) L(dz)$$
$$= \frac{1}{\pi}\lim_{\eta \searrow 0}\Bigl[\int_{\Im z > 0} \bar{\partial} \tilde{f}(z +\ii \eta)\tr \Bigl[\Bigl(W(z + i \eta) \la D_x \ra^{-s}(H(\epsilon) - (z + \ii \eta))^{-1}\la D_x \ra ^{-s}\Bigr)\Bigr] L(dz)$$
$$ + \int_{\Im z < 0} \bar{\partial} \tilde{f}(z - \ii\eta)\tr \Bigl(W(z - \ii\eta)\la D_x \ra^{-s}(H(\epsilon) - (z - \ii\eta) )^{-1}\la D_x \ra ^{-s} \Bigr)  L(dz)\Bigr].$$
Notice that the functions
$$\tr\Bigl(W(z \pm \ii\eta)\la D_x \ra^{-s}(H(\epsilon) - (z  \pm \ii\eta))^{-1}\la D_x \ra ^{-s}\Bigr)$$
are analytic in $\pm \Im z > 0$. Applying Green formula,  as in
Lemma 1 in \cite{DP1}, we deduce
$$\la \xi'(\lambda, \epsilon), f \ra = {\rm tr} \Bigl(f(H(\epsilon) - f(H_0)\Bigr) = -\frac{1}{\epsilon} {\rm tr} \Bigl (\pv f(H(\epsilon)\Bigr)$$
$$= \lim_{\eta \searrow 0} \frac{(-1)^{n}\epsilon^{n-1}}{2\pi \ii} \int f(\lambda)\tr \Bigl(W(\lambda) \Bigl[\la D_x \ra ^{-s}\Bigl((H(\epsilon) - (\lambda + \ii\eta))^{-1} - (H(\epsilon) - (\lambda - \ii\eta))^{-1}\Bigr) \la D_x \ra ^{-s} \Bigr]\Bigr)d\lambda,$$
where the integral is taken in the sense of distributions. On the
other hand, Proposition 4 combined with (\ref{eq:4.3}) show that
the right hand side of the above representation is finite and has
order ${\mathcal O}(\epsilon^{n-2}).$ Thus for $\forall f \in
C_0^{\infty}(\Xi)$ we obtain
$$\la \xi'(\lambda, \epsilon), f \ra = \int f(\lambda) T_{\epsilon}(\lambda) d\lambda$$
with $T_{\epsilon}(\lambda) = {\mathcal O}(\epsilon^{n -2})$ and
this completes the proof.
\section{Appendix}

The proof of the following Lemma is similar to the proof  of Proposition 2.1 in \cite{DP2} and for the reader convenience we give it.
\begin{lem}
Let $\delta>0$ and let $k_j(x,y)=\la x\ra^{-j(1+\delta)}\la y\ra^{-j({1\over 2}+\delta)}, j=1,2$. The operators
$G_2:=k_2(H_0+\ii)^{-2},\,\,G_2^*$,
(resp. $G_1:=k_1(H_0+\ii)^{-1},\,\,\, G_1^*$), are trace class (resp. Hilbert-Schmidt).
\end{lem}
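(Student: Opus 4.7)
The plan is to first establish the Hilbert--Schmidt property of $G_1 = k_1(H_0+\ii)^{-1}$, and then deduce the trace class property of $G_2 = k_2(H_0+\ii)^{-2}$ via the factorization $k_2 = k_1^2$. The statements for the adjoints $G_1^*$ and $G_2^*$ follow immediately since both the Hilbert--Schmidt ideal and the trace class ideal are self-adjoint.

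For the Hilbert--Schmidt property of $G_1$, I would exploit that the Weyl symbol of $(H_0+\ii)^{-1}$ is, to leading order, $p^{-1}$ with $p(x,y,\xi,\eta) = (\xi-By)^2+\eta^2+\epsilon x + \ii$, so that the principal Weyl symbol of $G_1$ is
\begin{equation*}
a(x,y,\xi,\eta) = \frac{k_1(x,y)}{(\xi-By)^2+\eta^2+\epsilon x+\ii}.
\end{equation*}
Since the Weyl quantization of a symbol in $L^2(\R^4)$ is Hilbert--Schmidt, it suffices to verify $a \in L^2(\R^4)$. Setting $\xi' = \xi - By$ and passing to polar coordinates in $(\xi',\eta)$, the momentum integral evaluates in closed form to $\pi(\pi/2 - \arctan(\epsilon x))$, which is uniformly bounded in $(x,y)$. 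The remaining spatial integral $\iint k_1(x,y)^2\,dx\,dy$ converges because $2(1+\delta)>1$ and $1+2\delta>1$. Lower-order corrections from the parametrix carry additional smoothing in momentum and are treated by the same computation, with comparison to the true resolvent via the resolvent identity.

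For $G_2$ I would use the factorization $k_2 = k_1^2$ and decompose
\begin{equation*}
G_2 = k_1\cdot\bigl[k_1(H_0+\ii)^{-1}\bigr]\cdot(H_0+\ii)^{-1} = G_1\cdot G_1 + k_1\bigl[k_1,(H_0+\ii)^{-1}\bigr](H_0+\ii)^{-1}.
\end{equation*}
The first summand is trace class as a product of two Hilbert--Schmidt operators. For the commutator term, the identity $[k_1,(H_0+\ii)^{-1}] = (H_0+\ii)^{-1}[H_0,k_1](H_0+\ii)^{-1}$ applies, and $[H_0,k_1]$ is a first-order differential operator whose coefficients involve $\partial_x k_1$ and $\partial_y k_1$, both decaying at least as fast as $k_1$. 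Extracting one more factor of $k_1$ from these derivative coefficients, the commutator term is again expressible as a composition of Hilbert--Schmidt operators with bounded intermediate factors, hence trace class.

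The main obstacle is the unboundedness of the Stark term $\epsilon x$, which prevents $(H_0+\ii)^{-1}$ from being a pseudodifferential operator with symbol in any class globally bounded in the $x$-variable, so a direct application of Weyl calculus to the resolvent alone is not available. This is precisely the technical point addressed by the proof of Proposition 2.1 of \cite{DP2}. The decisive observation is that once the decaying weight $k_1(x,y)$ is applied from the left, the $\epsilon x$ contribution enters the relevant integrals only through the uniformly bounded factor $\arctan(\epsilon x)$ arising after the momentum integration, so that the polynomial decay of $k_1$ in $x$ secures the finiteness of the full $L^2(\R^4)$ norm.
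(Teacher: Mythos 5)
The central gap is your treatment of $(H_0+\ii)^{-1}$ as a pseudodifferential operator with principal Weyl symbol $\bigl((\xi-By)^2+\eta^2+\epsilon x+\ii\bigr)^{-1}$ plus ``smoothing'' corrections. No such symbolic description is available here: in the region $\epsilon x\approx -\bigl((\xi-By)^2+\eta^2\bigr)$ one has $|p+\ii|=\mathcal O(1)$ while $|\xi-By|$ and $|\eta|$ are arbitrarily large, so for instance $\partial_\xi\bigl((p+\ii)^{-1}\bigr)=-2(\xi-By)(p+\ii)^{-2}$ is unbounded and the momentum derivatives of $(p+\ii)^{-1}$ are in no way better than the symbol itself. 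Hence $(p+\ii)^{-1}$ lies in no useful class $S^0(m)$, the parametrix expansion you invoke produces remainders with no gain (they are not even bounded), and ``comparison with the true resolvent via the resolvent identity'' cannot repair this; this is exactly the non-ellipticity/double-characteristic difficulty the paper emphasizes. Your closed-form computation (the $\arctan(\epsilon x)$ bound) only shows that the symbol $k_1(p+\ii)^{-1}$ is in $L^2(\R^4)$; it says nothing about the discrepancy between $k_1(H_0+\ii)^{-1}$ and $\mathrm{Op}^w\bigl(k_1(p+\ii)^{-1}\bigr)$, which is the whole point. The paper's proof works around this by conjugating with an explicit unitary $U$ so that $U^{-1}H_0U=D_y^2+y^2+x-\frac14$ while the weights $k_j$ become genuine pseudodifferential operators $k_j^\omega(x-D_y-\frac12,\,y+D_x)$, and then splitting phase space with a cutoff $A=q^\omega$: where $|\eta^2+y^2+x+\ii|\gtrsim\la y,\eta\ra^{k}$ the resolvent factor does have a symbol in $S^0(\la y,\eta\ra^{-k})$ and the trace/Hilbert--Schmidt norms are estimated by the $L^1$/$L^2$ criteria of \cite{DS}; on the complementary region one shows directly that $(I-A)\tilde k_j^\omega$ is trace class (resp.\ Hilbert--Schmidt) by explicit integral estimates over the set $\la y,\eta\ra^k\geq|\eta^2+y^2+x+\ii|$, and that is where the decay of $k_j$ in $x$ is really spent.

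Your reduction of $G_2$ also has a flaw. After writing $[k_1,(H_0+\ii)^{-1}]=(H_0+\ii)^{-1}[H_0,k_1](H_0+\ii)^{-1}$, the commutator $[H_0,k_1]$ contains terms such as $(D_x-By)\,(\partial_xk_1)$ and $D_y\,(\partial_yk_1)$, and your ``bounded intermediate factors'' implicitly require operators like $(D_x-By)(H_0+\ii)^{-1}$ or $D_y(H_0+\ii)^{-1}$ to be bounded. They are not (in contrast with $(D_x-By)(Q_0+\ii)^{-1}$), because $H_0$ is not bounded below: in the transformed picture the operator $y\,(D_y^2+y^2+x+\ii)^{-1}$ has norm of order $\sqrt n$ on states in the $n$-th oscillator level localized near $x\approx-(2n+1)$. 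So the decay of $\partial_xk_1,\partial_yk_1$ alone does not produce a factorization into Hilbert--Schmidt operators with bounded factors; handling such derivative terms requires the more careful factorizations of the type carried out in the paper's Lemma 1 and Appendix. The only parts of your argument that stand as written are the self-adjointness of the ideals (for $G_1^*,G_2^*$) and the observation that a product of two Hilbert--Schmidt operators is trace class.
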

\begin{proof}

Without loss of the generality we may assume that $B=\epsilon=1$.   Introduce the unitary operator $U : L^2(\R^2)\rightarrow  L^2(\R^2)$ by
$$(Uu)(x,y)={2\over \pi} \iint_{\R^2} e^{i\varphi(x,y,x',y')} \: u(x',y') \: dx'dy',$$
where
$\varphi(x,y,x',y')=xy-xy'-x'y+x'y'-\frac{1}{2} y'$.
A simple calculus shows that
\[ \widetilde H_0 = U^{-1} H_0 U = (D_y^2 + y^2) + x -\frac{1}{4} \,,\]
\[ \tilde k_j^\omega = U^{-1}k_j U = k_j^{\omega}\Bigl(x -  D_y - \frac{1}{2},  y +  D_x\Bigr) \,.\]

Since $U$ is unitary, it suffices to prove the lemma for $\tilde G_j:=UG_jU^{-1}=\tilde k_j^\omega(\widetilde H_0+\ii)^{-j}$.

   Let $\chi(t) \in C_0^{\infty} (\R; [0, 1])$ be a cut-off function such that $\chi(t) = 1$ for $|t| \leq 1$ and $\chi(t) = 0$ for $|t| \geq 2.$ Fix a number $k$, $\max \{1, \frac{2}{1 +2\delta}\} < k < 2,$ and introduce the symbol
\[ q(x, y, \eta) = \chi\Bigl( \frac{\la y, \eta \ra ^k}{|\eta^2 + y^2 + (x + i)|}\Bigr) \,, \]
where $\la y, \eta \ra = (1 + y^2 + \eta ^2)^{1/2}.$ It clear that $q(x, y, \eta) \in S^0(\R^4_{(x,\xi, y, \eta)})$
and we set $A = q^{\omega}(x, y, D_y).$ We decompose
\begin{equation}\label{eq:6.1}
 \tilde k_j^\omega(\widetilde H_0 + \ii)^{-j}  \, =A \tilde k_j^{\omega} (\widetilde H_0 +\ii)^{-j} + (I- A) \tilde{k}_j^{\omega}(\widetilde H_0 + \ii)^{-j}=L_j+M_j. \, \end{equation}
To treat $L_j$, notice that on the support of $q(x, y, \eta)$ we have 
\[ (\eta^2 + y^2 + x + \ii)^{-1} \in S^0(\R^4; \la y, \eta \ra ^{-k}) \,. 
\]
In fact, on the support of $q$ we obtain
\[ \la y, \eta \ra^k \leq 2  | \eta^2 + y^2 + x + \ii|, \]
and it is easy to estimate the derivatives of $(\eta^2 + y^2 + x + \ii)^{-1}$. According to the calculus of pseudodifferential operators, $L_j$ becomes a pseudodifferentail operator with symbol in
$$S^0(\R^4; \la y, \eta \ra ^{-k} \la x - \eta \ra ^{-j (1+\delta)} \la y + \xi \ra ^{-j({1\over 2} + \delta)}),$$
and the trace norm (resp. Hilbert-Schmidt norm) of $L_2$ (resp. $L_1$) can be estimated (see for instance, Proposition 9.2 and Theorem 9.4 in \cite{DS}) by
\begin{equation}\label{eq:6.2}
\|L_1\|_{{\rm HS}}^2 +\|L_2\|_{{\rm tr}}
 \leq C_0 \iiiint \la y, \eta \ra ^{-2k} \la x - \eta \ra^{-2 - 2\delta} \la y + \xi \ra^{-1 -2\delta}   dx d\xi dy d\eta 
 \end{equation}
 $$ \leq C_0' \iint \la y, \eta \ra^{-2k} dy d\eta \leq C_0''. $$
To deal with $M_j,\: j = 1,2,$ we will show that $(I - A) \tilde{k}_2^{\omega}$ is trace class operator and $(I - A) \tilde{k}_1^{\omega}$ is Hilbert-Schmidt one. 

 Notice that on the support of the symbol of  $(I - A)$ we have
\[ \la y, \eta \ra^k \geq |\eta^2 + y^2 + x +\ii| \,. \]
Taking into account the estimate $\partial_x^l\partial_y^m k_j(x,y)={\mathcal O}_{l, m}(\la x\ra^{-j(1+\delta)}\la y\ra^{-j({1\over 2}+\delta)})$, we get
\begin{equation}\label{eq:6.3}
 \|(I - A)k_1^{\omega} \|_{{\rm HS}}^2\: +\:\|(I - A)k_2^{\omega} \|_{{\rm tr}} 
 \leq
 C_1 \iiiint_{\la y, \eta \ra^k \geq |\eta^2 + y^2 + x +\ii|} \la x - \eta \ra^{-2 - 2\delta} \la y +  \xi \ra ^{-1 -2\delta} dx d\xi dy d\eta  \end{equation}
\[ \leq C_2 \iiint_{\la y, \eta \ra^k \geq |\eta^2 + y^2 + x +\ii|} \la x - \eta \ra^{-2 -2 \delta} dx dy d\eta \, 
 \leq C_2 \iiint_{\la y, \eta \ra^k \geq |\eta^2 + y^2 + \eta + u +\ii|} \la u \ra^{-2 -2 \delta} dudy d\eta \, \]
\[ \leq C_2'\iiint_{\la y, \eta \ra^k \geq |\eta^2 + y^2 + \eta + u|,\: \atop |u| \leq \frac{1}{2}\la y, \eta \ra^k} \la u \ra^{-2 -2\delta} dudy d\eta \, 
 + C_2'\iiint_{\la y, \eta \ra^k \geq |\eta^2 + y^2 + \eta + u|,\: \atop |u| \geq \frac{1}{2}\la y, \eta \ra^k} \la u \ra^{-2 -2 \delta} dudy d\eta \, \]
\[ \leq C_2' \Bigl(\iiint_{|u| \leq C_3, |y| \leq C_3, |\eta| \leq C_3} \la u \ra^{-2 -2 \delta} du dy d\eta + \iiint_{|u| \geq \frac{1}{2} \la y, \eta \ra^k} \la u \ra^{-2 -2 \delta} du dy d\eta \Bigr)\, \]
\[ \leq C_4 + C_5 \int \la u \ra ^{-2 -2 \delta}\Bigl(\int_0^{(2|u|)^{\frac{1}{k}}} r dr \Bigr) du
 \leq C_4  + C_6 \int \la u \ra^{-2 -2 \delta+ 2/k} du \leq  C_7 \,, \]
since $-2 -2 \delta + 2/k < -1.$ 

Using (\ref{eq:6.1}),(\ref{eq:6.2}), (\ref{eq:6.3}) 
 and the fact that $M$ is trace class (resp. Hilbert-Schmidt) operator if and only if $M^*$ is trace class (resp. Hilbert-Schmidt) operator,
we complete the proof of the lemma.
\end{proof}

{\footnotesize

\end{document}